\numberwithin{equation}{section}
\newtheorem{theo}{Theorem}[section]
\newtheorem{pro}[theo]{Proposition}
\def\captionfont@{\footnotesize}
\def\captionheadfont@{\scshape}
\long\def\@makecaption#1#2{%
  \vspace{2mm}
  \setbox\@tempboxa\vbox{\color@setgroup
    \advance\hsize-6pc\noindent
    \captionfont@\captionheadfont@#1\@xp\@ifnotempty\@xp
        {\@cdr#2\@nil}{.\captionfont@\upshape\enspace#2}%
    \unskip\kern-6pc\par
    \global\setbox\@ne\lastbox\color@endgroup}%
  \ifhbox\@ne 
    \setbox\@ne\hbox{\unhbox\@ne\unskip\unskip\unpenalty\unkern}%
  \fi
  \ifdim\wd\@tempboxa=\z@ 
    \setbox\@ne\hbox to\columnwidth{\hss\kern-6pc\box\@ne\hss}%
  \else 
    \setbox\@ne\vbox{\unvbox\@tempboxa\parskip\z@skip
        \noindent\unhbox\@ne\advance\hsize-6pc\par}%
\fi
  \ifnum\@tempcnta<64 
    \addvspace\abovecaptionskip
    \moveright 3pc\box\@ne
  \else 
    \moveright 3pc\box\@ne
    \nobreak
    \vskip\belowcaptionskip 
  \fi
\relax
}
\def\writefig#1 #2 #3 {\rlap{\kern #1 truecm
\raise #2 truecm \hbox{#3}}}
\DeclareMathSymbol{\leqslant}{\mathalpha}{AMSa}{"36} 
\DeclareMathSymbol{\geqslant}{\mathalpha}{AMSa}{"3E} 
\DeclareMathSymbol{\eset}{\mathalpha}{AMSb}{"3F}     
\renewcommand{\leq}{\;\leqslant\;}                   
\renewcommand{\geq}{\;\geqslant\;}                   
\newcommand{\bra}{\langle}
\newcommand{\ket}{\rangle}
\newcommand{\cC}{\ensuremath{\mathcal C}}
\newcommand{\cF}{\ensuremath{\mathcal F}}
\newcommand{\cG}{\ensuremath{\mathcal G}}
\newcommand{\cJ}{\ensuremath{\mathcal J}}
\newcommand{\bbP}{{\ensuremath{\mathbb P}} }
\newcommand{\bbZ}{{\ensuremath{\mathbb Z}} }
\newcommand{\ga}{\alpha}
\newcommand{\gb}{\beta}
\newcommand{\gd}{\delta}
\newcommand{\gp}{\varphi}
\newcommand{\gr}{\rho}
\newcommand{\gD}{\Delta}
\newcommand{\gl}{\lambda}
\newcommand{\gs}{\sigma}
\newcommand{\hq}{\hat q}
\newcommand{\grm}{\gr^-}
\newcommand{\grp}{\gr^+}
\newcommand{\mmu}{E}
\title{A diffusive system driven by a battery or by a smoothly varying field}
\author{T. Bodineau$^{(1)}$, B. Derrida$^{(2)}$, 
J. L. Lebowitz$^{(3)}$}
\address{
(1) D\'epartement de math\'ematiques et applications, Ecole Normale
Sup{\'e}rieure, CNRS-UMR 8553, 75230 Paris cedex 05, France}
\address{(2) Laboratoire de Physique Statistique, Ecole Normale
Sup{\'e}rieure, 24 rue Lhomond, 75231 Paris Cedex 05, France}
\address{(3) Departments of Mathematics and Physics,  Rutgers University, 110 Frelinghuysen Road, Piscataway, NJ 08854}
\date{\today}
\keywords{Non-equilibrium, Long range correlations, Canonical measure, Fluctuating hydrodynamics}
\begin{document}

\maketitle

\begin{abstract}
We consider the steady state of a one dimensional diffusive system,   such as the symmetric simple exclusion process (SSEP) on a ring, driven by a battery  at the origin or by a smoothly varying field along the ring. The battery appears as the limiting case of a smoothly varying field, when the field becomes a delta function at the origin.
We find that in the scaling limit, 
the long range pair correlation functions of the system driven by a battery turn out to be very different from  the ones known in the steady state of the SSEP maintained out of equilibrium by contact with two reservoirs, even when the steady state density profiles are identical in both models.
\end{abstract}

\section{Introduction}

There are notoriously few fully analyzable models of non trivial (interacting) current carrying systems \cite{L, Schutz} in non-equilibrium steady states (NESS). The few exceptions are almost all one dimensional lattice gases evolving according to stochatic jump processes and interacting via exclusions, in contact with particle reservoirs at different densities.
Among the simplest of such models is one of particles on a lattice of $N$ sites ($i =1, \dots, N$) in which the bulk dynamics are given by the symmetric simple exclusion process (SSEP) and sites $1$ and $N$ are in contact with particle reservoirs. In the case of reservoirs at equal densities the stationary state satisfies detailed balance. For reservoirs at unequal densities, say $\gr_a$ on the left and $\gr_b$ on the right ($\gr_a > \gr_b$), the matrix method gives a full microscopic description of the NESS including the explicit form of the correlation functions 
$\gr_k(i_1,\dots, i_k; \gr_a,\gr_b,N)$ \cite{DLS} and allows one to calculate the large deviation function (LDF) in the hydrodynamical scaling limit,
$N \to \infty$, $i/N \to x \in [0,1]$ \cite{DLS0,BDGJL2, BDGJL5}. One can also obtain, in this scaling limit (with diffusively scaled time), the time evolution of the system in a non stationary state via a diffusion equation for typical density profiles as well as the LDF for time evolving densities and currents \cite{BDGJL2, BDGJL2006, D2, BD2007}.

\medskip

In attempts to further extend our understanding of such non-equilibrium systems, we investigate 
here the stationary state of a system of $M$ particles on a ring of $N$ sites with exclusion.
The jump rates to the right and left (counter clockwise, clockwise) for a particle at site $i$ are $p_i$ and $q_i$, $i = 1 \dots N$.
We shall be interested, as usual, in the case $N \gg 1$, $M/N = \bar \gr$.
We consider two types of situations: 
\begin{enumerate}
\item The "battery" case: the system evolves according to the SSEP except that there is a "battery" at bond $(N,1)$, i.e. $p_i = 1$ for $i \not = N$, $q_i = 1$ for $i \not = 1$ and $p_N = p, q_1 = q$ with $p$ and $q$ independent of $N$.
\item The smooth asymmetric case (WASEP) $p_i = 1 + \frac{1}{N} E (i/N)$ and $q_i = 1 - \frac{1}{N} E (i/N)$ with $E$ a smooth field of period 1.
\end{enumerate}
(Combination of both cases can also be treated).

In the first case, the microscopic "battery" at $(N,1)$ induces a NESS with a particle current of magnitude proportional to $1/N$ and a concomitant linear density profile.
This looks very similar to what happens in the open system and would suggest that the measures for the reservoirs and 
battery driven systems would be similar when $N \gg 1$, in the same spirit as the difference between 
``canonical'' and ``grand-canonical'' nature of the two systems disappears in equilibrium, 
when $p=q$ and $M/N \to \gr_a=\gr_b$.
This is indeed the case to the leading order in $N$.   
Local equilibrium holds in both models and implies that both steady state statistics are given locally by a product Bernoulli measure. 
Here, we will focus on the corrections to this local Bernoulli measure and on the long range correlations. 
We will see that the two NESS measures differ much more than the corresponding equilibrium ones. 
The complicated interplay between the non-equilibrium long-range correlations, the canonical constraint and the driving mechanism leads to 
two-point correlation functions with different structure for the reservoir and battery driven systems.
In particular, the long range correlations are much more singular for the battery model than those in the open system.
A variant of this model with a macroscopic battery was introduced in \cite{HF} and a related dynamics with colliding hard spheres was also considered in \cite{Burdzy}.

In the second case, the macroscopic stationary density profile $\bar \gr(x)$ then satisfies the general equation for driven diffusive systems
\begin{eqnarray}
\label{eq: debut}
\partial_x  \bar \gr(x)  - 2 E(x) \bar \gr(x) \big( 1 - \bar \gr(x) \big) = - \cJ \, ,
\end{eqnarray}
where $\cJ$ is the stationary current which is independent of $x$.
Equation \eqref{eq: debut} corresponds to the stationary solution of the diffusion equation with a drift
\begin{eqnarray}
\label{eq: debut 1}
\partial_t \gr  = \partial^2_x   \gr   - \partial_x  \left( 2 E(x) \gr(1 - \gr) \right) \, .
\end{eqnarray}
The first case can be thought of as a very singular limit of the second one when $E(x)$ becomes a $\delta$-function localized at the origin.

\medskip

We will also consider the  Zero Range Process (ZRP) driven by a "battery".
This microscopic dynamics is simpler than the SSEP driven by a battery, as the NESS of the ZRP on the ring is a product measure constrained to having a fixed number of particles.
There are no long range correlations in the non-equilibrium measure of the ZRP with reservoirs \cite{MF}
and the long range correlations of the ZRP driven by a battery are similar to those found for an equilibrium system with 
the constraint that the total particle number is fixed.

\medskip

The outline of the rest of the paper is as follows.
In section \ref{sec: The models}, we define the microscopic models.
By using a macroscopic approach, the two-point correlation functions are computed for the battery model in section 
\ref{sec: Macroscopic approach} and for a slowly varying field in section \ref{sec: The macroscopic approach for the slowly varying field}.
These results are then compared, in section \ref{sec: Comparison with the open system and with numerical simulations}, 
to the two-point correlation functions of an open system in contact with reservoirs.
Finally the invariant measure of the ZRP driven by a battery is computed in section \ref{sec: exactly solvable models}.
Some technical details concerning the solution of the macroscopic equations are given in the appendices.

\section{The models}
\label{sec: The models}

\subsection{The battery model}
\label{The battery model}

We consider the SSEP on the ring $\{1,N\}$ containing $M$ particles with jump rates $1$ to the 
left and to the right except at the bond $(N,1)$ where the jump rate from $N$ to $1$ is $p$
and from $1$ to $N$ is $q$. These modified asymmetric  rates act as a battery which forces a current of particles through the system. We assume that $p>0$ and $q>0$. 

In the absence of the battery ($p = q = 1$),  the stationary state is one in which all the  $\Big( {N \atop M} \Big)$ configurations have equal weight.
For $N \to \infty$, $M = \bar \gr N$, this corresponds locally to a product Bernoulli measure with density $\bar \gr$. 
Denoting by $\nu_i^{\bar \gr}$ the Bernoulli measure at site $i$ with density
$\bar \gr$, the product measure  $\bigotimes_{i \in \bbZ} \nu_i^{\bar  \gr}$ is invariant wrt the SSEP dynamics for any constant density $\bar \gr$.

For the SSEP on $\bbZ$ with a battery at the bond $(0,1)$, the invariant measure remains a product but the density is discontinuous across 
the battery. The invariant measures on $\bbZ$ are
\begin{eqnarray}
\label{eq: invariant Z}
\nu^{\gr^-,\gr^+} = \bigotimes_{i \leq 0} \nu_i^{\gr^-} \; \bigotimes_{i \geq 1} \nu_i^{\gr^+} \, ,
\end{eqnarray}
where the densities 
$\gr^-,\gr^+$ (at sites  $0$ and $1$) satisfy the equation
\begin{eqnarray}
\label{eq: constraint}
p \; \gr^- (1-\gr^+) = q \; \gr^+ (1- \gr^-) \, .
\end{eqnarray}
These measures satisfy detailed balance and there is no current flowing in the system.

\medskip

On the ring when $p \neq q$, the steady state measure is not known. Taking averages with respect to the time evolving measure
$\mu_{N,\tau}$, we obtain
\begin{eqnarray}
\label{eq: dynamic eq}
i \neq 1,N, \quad &&
\partial_\tau \mu_{N,\tau} \big( \eta_i \big) = \mu_{N,\tau} \big( \eta_{i+1} \big)+ \mu_{N,\tau} \big( \eta_{i-1} \big) - 2 \mu_{N,\tau}  \big( \eta_i \big) \\
&& 
\partial_\tau \mu_{N,\tau} \big( \eta_1 \big) = \mu_{N,\tau} \big( \eta_2 \big) - \mu_{N,\tau} \big( \eta_1 \big)
+ p \mu_{N,\tau} \big( \eta_N (1-\eta_1) \big) - q \mu_{N,\tau} \big( \eta_1 (1-\eta_N) \big)  \nonumber \\
&& 
\partial_\tau \mu_{N,\tau} \big( \eta_N \big) = \mu_{N,\tau} \big( \eta_{N-1} \big) - \mu_{N,\tau} \big( \eta_N \big)
- p \mu_{N,\tau} \big( \eta_N (1-\eta_1) \big) + q \mu_{N,\tau} \big( \eta_1 (1-\eta_N) \big) \nonumber 
\end{eqnarray}
It follows immediately that in the stationary state,  the average density profile is linear 
\begin{eqnarray}
\label{eq: profile steady}
\bra \eta_i \ket = \bra \eta_1 \ket + \frac{i-1}{N - 1} 
\big( \bra \eta_N \ket -  \bra \eta_1 \ket \big) \, ,
\end{eqnarray}
where $\bra \cdot \ket$ stands for the stationary  measure and the mean density $\bar \gr = \frac{M}{N}$ satisfies
\begin{eqnarray}
\label{eq: mean density}
\bar \gr = \frac{1}{2}  \big(\bra \eta_N \ket +  \bra \eta_1 \ket \big) \, .
\end{eqnarray}
There is also a stationary average current $\hq$ of order $1/N$
\begin{eqnarray}
\label{eq: mean flux}  
\hq = \frac{1}{N}  \big( \bra \eta_1 \ket - \bra \eta_N \ket  \big) \, ,
\end{eqnarray}
and one has the identity
\begin{eqnarray}
\label{eq: identity}
p \bra  \eta_N (1-\eta_1) \ket - q \bra \eta_1 (1-\eta_N) \ket
= \hq \, .
\end{eqnarray}
For finite $N$, the relations \eqref{eq: profile steady} and \eqref{eq: identity} do not allow to determine the average profile $\bra \eta_i \ket$ as 
 \eqref{eq: identity} involves the correlation between sites $N$ and $1$. 
If one assumes however that local equilibrium holds for large  $N$ then  the system will behave at the battery as if it was described by the measure 
\eqref{eq: invariant Z}. Thus one expects that for large  $N$
\begin{eqnarray}
\label{eq: decouplage}
\bra \eta_N  \eta_1 \ket \approx \bra \eta_1 \ket \, \bra \eta_N \ket 
= \gr^+ \gr^-\, ,
\end{eqnarray}
with $\gr^\pm$ satisfying 
\begin{eqnarray}
\label{eq: gr+ gr-}
p \; \gr^- (1-\gr^+) = q \; \gr^+ (1- \gr^-), 
\qquad  
\frac{\gr^+ + \gr^-}{2} = \bar \gr \, .
\end{eqnarray}
where we used \eqref{eq: mean density} and \eqref{eq: identity}.
This implies that in terms of the macroscopic variable $x = i/N$, that the steady state density $\bar \gr(x)$ is linear with a discontinuity at the battery
\begin{eqnarray}
\label{eq: bar rho}
\bar \rho(x) = \gr^+ + (\gr^- - \gr^+) x, \quad {\rm with}  \qquad 
\bar \rho(0) = \gr^+,  \quad 
\bar \rho(1) =  \gr^- \, .
\end{eqnarray}

The local equilibrium \eqref{eq: decouplage} can be justified  \cite{BL} and 
one can prove that after space/time rescaling, the microscopic equations lead to a macroscopic description of the local density by the heat equation. 
The battery yields non-linear boundary conditions
\begin{eqnarray}
\label{eq: heat equation}
\forall t>0, x \in ]0,1[, \qquad
\begin{cases}
\partial_t \gr(t,x) = \partial_x^2 \gr(t,x)\\
p \; \gr(t,1) \big( 1-\gr(t,0) \big) = q \; \gr(t,0) \big( 1- \gr(t,1) \big)\\
\partial_x \gr(t,0)  = \partial_x \gr(t,1)
\end{cases}
\end{eqnarray}
where $\gr(t,x)$ stands for the local density at the macroscopic time $t$ and position $x$.

\subsection{The case of a slowly varying field}
\label{sec: The case of a slowly varying field}

In contrast to the battery where the driving force is localized on  a single  bond, we will now consider  the case of a  smoothly varying macroscopic  field  $\mmu(x)$  along the ring.  The microscopic model is now the weakly asymmetric simple exclusion process of $M$ particles on a ring  of $N$ sites. 
A  particle  on site $i$ jumps to its neighboring site on its right at rate $p_i$ and to its neighboring site on its left at rate $q_i$ provided that the target site is empty. By weak asymmetry, we mean that $p_i$ and $q_i$ have the following scaling dependence on the system size $N$
 $$
 p_i = 1 + {1 \over N } \mmu \left( {i \over N} \right), \qquad
 q_i = 1 -  {1 \over N } \mmu \left( {i \over N} \right) \ .
 $$
When the integral of $\mmu(x)$ over the circle is non zero, the system will reach a non-equilibrium steady state for every fixed $N$.
 
\medskip 
 
The time evolution of the microscopic mean density is given by
\begin{equation}
\label{eq: evolution E}
\partial_\tau \mu_{N,\tau} (\eta_i) =
p_{i-1} \mu_{N,\tau} \big( \eta_{i-1} (1 - \eta_{i}) \big)
-q_{i} \mu_{N,\tau} \big( \eta_{i} (1 - \eta_{i-1}) \big)
-p_{i} \mu_{N,\tau} \big( \eta_{i} (1 - \eta_{i+1}) \big)
+q_{i+1} \mu_{N,\tau} \big( \eta_{i+1} (1 - \eta_{i}) \big)
\end{equation}
where $\mu_{N,\tau} (\cdot)$ is the expectation value wrt the time evolving measure.

Assuming that  for large $N$,   the density profile takes the following scaling form 
\begin{equation}
\mu_{N,\tau} \big( \eta_{i}  \big) =  \rho \left( {\tau \over N^2},{i \over N} \right) \, ,
\label{scal1}
\end{equation}
and the correlation functions scale as
\begin{equation}
\mu_{N,\tau} \big( \eta_{i}  \eta_{j} \big)- \mu_{N,\tau} \big(  \eta_{i} \big) \mu_{N,\tau} \big(  \eta_{j} \big)
 =  {1 \over  N} {\mathcal C} \left( {\tau \over N^2}, {i \over N},{j \over N} \right) \, , 
\label{scal2}
\end{equation}
one can show that for $x = i/N, t = \tau/N^2$
\begin{eqnarray*}
&& p_{i-1} \mu_{N,\tau} \big( \eta_{i-1} (1 - \eta_{i}) \big) -q_{i} \mu_{N,\tau} \big( \eta_{i} (1 - \eta_{i-1}) \big)\\
&& \qquad \qquad
\simeq {1 \over N} \left[  - \partial_x \rho(t,x) + 2 \mmu(x) \rho(t,x) (1-  \rho(t,x)) \right] + {1 \over N^2} W(t,x) \, ,
\end{eqnarray*}
where $W$ depends on the two-point correlation function $\cC$ defined in \eqref{scal2}.
One then gets from \eqref{eq: evolution E} the macroscopic evolution equation
\begin{equation}
\label{F-evol}
\partial_t \rho(t,x) =  \partial_x \left[  \partial_x \rho (t,x) -  2 \mmu(x)  \rho(t,x) (1 - \rho(t,x)) \right]  \, .
\end{equation}
This is the viscous Burgers equation on a ring $x \in [0,1]$.
A mathematical derivation of \eqref{F-evol} can be found in \cite{KL, HS}.

The macroscopic evolution of the battery  model (\ref{eq: heat equation}) can then be recovered by taking a large localized field 
$\mmu(x)$  at the origin with a given integral equal to $K$. 
According to (\ref{F-evol}) this leads, in the limit, to  a jump of density across the origin  
\begin{eqnarray*}
\log \frac{\grp}{1-\grp} - \log \frac{\grm}{1-\grm}
=
\int_{\grm}^{\grp} {d \rho \over \rho(1- \rho) } = 2 K \, .
\end{eqnarray*}
This expression is equivalent to \eqref{eq: constraint} with $2 K = \log (p/q)$. 
The effect of the battery should be understood as a force which maintains a fixed difference of chemical potentials across  the origin.

\subsection{The Zero Range Process and the moving battery model}
\label{sec: moving battery}

In section \ref{sec: Zero range process}, we will study the invariant measure of a ZRP driven by a battery. 
The microscopic dynamics is defined as follows.
At site $i$,  the occupation number $\eta_i$ can take any integer value and a particle at site $i$ performs a jump to the left or to the right with rate $g(\eta_i)$, where the function $g$ is increasing and $g(0) =0$. The driving force is modeled by modifying the jump rates between sites $1$ and $N$: a particle jumps from  $1$ to $N$ with rate  $q \; g(\eta_1)$ and from  $N$ to $1$ with rate $p \; g(\eta_N)$. The hydrodynamic limit for this model on $\bbZ$ was derived in \cite{LOV}.

\medskip

The ZRP driven by a battery is related to the SSEP on a ring driven by a "moving battery", i.e. a special tagged particle with asymmetric jump rates $p$ and $q$. One can think of this particle as being driven by an external field $E$ with $p/q = \exp(2  E)$ \cite{FGL, LOV}.
A current will be induced by the moving battery.
The mapping between the two models was exploited by \cite{LOV} to study the motion of this tagged particle in $\bbZ$. 
More precisely, the SSEP with a moving battery and with $M$ particles on the ring of length $N$ can be mapped onto a ZRP on a ring of length $M$ with $N-M$ particles and with the specific rates $g(n) = 1_{n \geq 1}$.
Let $1$ be the index of the asymmetric particle in the SSEP and $2, \dots, M$ the indices of the other particles, then the ZRP variable
$\eta_i$ stands for the number of empty spaces ahead of particle $i$.

\section{The macroscopic approach for the battery model}
\label{sec: Macroscopic approach}

A generic property of non-equilibrium systems, maintained in a steady state by contact with 
reservoirs  at unequal chemical potentials, is the presence of long range correlations \cite{Spohn,SC,DKS, DELO,BDGJL-cor}.
The steady state correlations  can be predicted from a macroscopic approach based on  "fluctuating hydrodynamics" \cite{Spohn, DELO, OS1}.
Alternatively, one can derive the density large deviation functional for the steady state and then recover the correlations by expanding the functional near the steady state \cite{BDGJL2, BDGJL-cor, D2, BDLvW}.
We will apply the latter approach to compute the correlations in the battery model introduced in section \ref{The battery model}.

\subsection{The hydrodynamic large deviations}

After rescaling the space by $N$ ($i = x N$) and the time by $N^2$ ($\tau = N^2 t$), the microscopic system can be described at the hydrodynamic scale by two macroscopic functions
the local density $\gr(t,x)$ and the local current of particles $q(t,x)$ which obey the conservation law
\begin{eqnarray}
\label{eq: conservation law}
\partial_t \gr(t,x) = - \partial_x q(t,x)  \, . 
\end{eqnarray}

The probability of observing a joint deviation of the current and the density in our microscopic system depends exponentially on the system size. 
More precisely, the probability of observing an atypical macroscopic trajectory $(\gr(t,x), q(t,x))_{0 \leq t \leq T}$ during the macroscopic time interval $[0,T]$ scales like
\begin{eqnarray*}
\bbP_{[0,T N^2]} \Big( (\gr(t,x), q(t,x)) \Big) 
\simeq 
\exp \left( - N {\hat \cF}_{[0,T]} \big( \gr,q \big) \right) \, , 
\end{eqnarray*}
where the large deviation functional is given by 
\begin{eqnarray}
\label{eq: mother}
&& {\hat \cF}_{[0,T]} \big( \gr,q \big) =
\int_0^T \; dt \; 
\int_0^1 \; dx  \frac{\big( q(t,x)  
+  \partial_x \gr (t,x) \big)^2}{2 \, \gs(\gr(t,x))} \, .
\end{eqnarray}
Here $\gs(u) = 2 u (1-u)$ and the density has to satisfy the conditions 
\begin{eqnarray}
\label{eq: bd conditions}
p \; \gr(t,1) (1-\gr (t,0)) = q \; \gr (t,0) (1- \gr (t,1) ), \qquad
\int_0^1 \, dx \; \gr(t,x) = \bar \gr\, .
\end{eqnarray}
If  $(\gr(t,x), q(t,x))$  does not satisfy \eqref{eq: conservation law} or if the conditions \eqref{eq: bd conditions} are not satisfied (for a set of times with non-zero Lebesgue measure) then the functional is infinite.
The expression \eqref{eq: mother} is a generalization of the large deviation functionals derived for open systems (see \cite{BDGJL5, D2, BD2007} for reviews).
The condition \eqref{eq: bd conditions} at the battery follows from \eqref{eq: constraint},
  as local equilibrium is still satisfied in the hydrodynamic large deviation regime.

\medskip

To compute the probability of observing an atypical density trajectory $(\gr(t,x))_{0 \leq t \leq T}$
\begin{eqnarray}
\label{eq: dev density}
\bbP_{[0,T N^2]} \Big( \gr(t,x) \Big) 
\simeq 
\exp \left( - N \cF_{[0,T]} \big( \gr \big) \right) \, , 
\quad {\rm with} \quad
\cF_{[0,T]} \big( \gr \big) = \inf_q {\hat \cF}_{[0,T]} \big( \gr,q \big) \, ,
\end{eqnarray}
one has to optimize ${\hat \cF}_{[0,T]} \big( \gr,q \big)$ over all  the currents which are compatible with the density \eqref{eq: conservation law}
and which can be written as
\begin{eqnarray}
\label{eq: conservation law 2}
q(t,x) = j(t) - \int_0^x \partial_t \gr(t,u) \, du  \, , 
\end{eqnarray}
where $j(t)$ is the current at 0.
Optimizing \eqref{eq: mother} over $j(t)$ implies
\begin{eqnarray}
\label{eq: H boundary 00}
\forall t>0, \qquad
0 =
\int_0^1 \; dx  \frac{ j(t) - \int_0^x \partial_t \gr(t,u) \, du  + \partial_x  \gr (t,x) }{ \gs(\gr(t,x))} \, .
\end{eqnarray}
We introduce an auxiliary function $H$ such that 
\begin{eqnarray}
\label{eq: current}
q(t,x) = - \partial_x  \gr (t,x)  + \gs(\gr(t,x)) \partial_x H(t,x) \, ,
\end{eqnarray}
and get from \eqref{eq: H boundary 00} 
\begin{eqnarray*}
0= \int_0^1 \, dx  \frac{ q(t,x) + \partial_x  \gr (t,x) }{ \gs(\gr(t,x))} 
= \int_0^1 dx  \partial_x  H (t,x) = H(t,1) - H(t,0) \, .
\end{eqnarray*}
This leads to a continuity condition on $H$ at the battery
\begin{eqnarray}
\label{eq: H continuous}
H(t,0)=H(t,1) \, .
\end{eqnarray}

For a trajectory $(\gr(t,x))_{0 \leq t \leq T}$, there exists $H$ satisfying \eqref{eq: current} and such that
\begin{eqnarray}
\label{eq: H, rho}
\partial_t \gr (t,x) =  \partial_x^2 \gr (t,x) - \partial_x \Big( \gs(\gr(t,x)) \partial_x H(t,x) \Big) \, ,
\end{eqnarray}
thus the functional \eqref{eq: dev density} reads 
\begin{eqnarray}
\label{eq: fonc density}
\cF_{[0,T]} \big( \gr \big) 
= \frac{1}{2} \int_0^T dt  \int_0^1 dx  \;  \gs(\gr(t,x)) \big(\partial_x H (t,x) \big)^2 \, .
\end{eqnarray}
Note that finding the optimal trajectories $(H,\gr)$ is equivalent to finding the optimal $(q,\gr)$.

\subsection{Steady state large deviations}
\label{subsec: Steady state large deviations}

Following \cite{BDGJL2}, the density large deviations for the steady state can be computed by using the hydrodynamic large deviations. For any smooth function $\gl (x)$ in $[0,1]$, one defines 
\begin{eqnarray}
\label{eq: G(lambda) 0}
\cG(\gl)
= \lim_{N \to \infty} \frac{1}{N} \log \left \bra 
\exp \left( \sum_{i=1}^N \gl  \left( \frac{i}{N} \right) \eta_i \right)  \right \ket  \, ,
\end{eqnarray}
which is given according to  \cite{BDGJL2} by 
\begin{eqnarray}
\label{eq: G(lambda)}
\cG(\gl) = \lim_{T \to \infty} \; \sup_{\gr}  
\left\{ \int_0^x \gl (x) \, \gr(0,x) - \cF_{[-T,0]} \big( \gr \big) \right\} \, ,
\end{eqnarray}
where $\cF_{[-T,0]}$ is defined by \eqref{eq: fonc density}.
The supremum ranges over all the density trajectories $\gr(t,x)$ in the macroscopic time interval $[-T,0]$ which are equal to the steady state $\bar \gr(x)$ at time $-T$.

\medskip

Given the function $\gl$ and a time $-T$, we look for the optimal density trajectory $\gr$ in the variational problem \eqref{eq: G(lambda)}. 
To determine this optimal trajectory, we consider a variation $\gr \to \gr + \gp$ and $H \to H + h$.
From \eqref{eq: H, rho}, $\gp$ and $h$ satisfy 
\begin{eqnarray}
\label{eq: h, phi}
\partial_t \gp  = \partial_x^2 \gp  - \partial_x \Big( \gs'(\gr) H^\prime \, \gp  + \gs(\gr) h^\prime \Big) \, .
\end{eqnarray}
To simplify notations, we have omitted the $(t,x)$ dependence and used the shorthand $H^\prime = \partial_x H(t,x)$.

The conditions \eqref{eq: bd conditions} become
\begin{eqnarray}
\label{eq: linear bd conditions}
\frac{\gp(t,1)}{ \gr(t,1) (1-\gr (t,1))} =
\frac{\gp(t,0)}{ \gr(t,0) (1-\gr (t,0))}, 
\qquad
\int_0^1 \, dx \, \gp(t,x) = 0 \, .
\end{eqnarray}
As the density is equal to $\bar \gr$ at time $-T$ then $\gp(-T,x) = 0$.

\medskip

For an optimal trajectory $\gr$,  the first order perturbation in $\gp$ in the variational problem \eqref{eq: G(lambda)} should be equal to 
0 so that 
\begin{eqnarray}
\label{eq: euler 0}
\int_0^1 dx \, \gl(x) \gp(0,x) - 
\int_{-T}^0 dt  \, \int_0^1 dx \; \left(
\frac{1}{2} \gs'(\gr) \big( H^\prime \big)^2 \; \gp +  \gs(\gr)  H^\prime \, h^\prime \right)  = 0 \, .
\end{eqnarray}
Integrating by parts and using \eqref{eq: h, phi}, one gets at time $t$
\begin{eqnarray*}
&& \int_0^1 dx \;  \gs(\gr) H^\prime h^\prime \\
&& \qquad = -  \int_0^1 dx \; \big( \partial_x \big( \gs(\gr) h^\prime \big) \big) H 
+ \left( H(t,1)  \gs(\gr(t,1)) h^\prime (t,1) -  H(t,0)  \gs(\gr(t,0)) h^\prime (t,0) \right) \\
&& \qquad =    \int_0^1 dx \;  \left( \partial_t \gp  - \partial_x^2 \gp 
+ \partial_x \Big( \gs'(\gr) H^\prime  \gp \Big)  \right) H  \\
&& \qquad \qquad + \left( H(t,1)  \gs(\gr(t,1)) h^\prime (t,1) -  H(t,0)  \gs(\gr(t,0)) h^\prime (t,0) \right) \, .
\end{eqnarray*}
Another integration by parts leads to 
\begin{eqnarray}
\label{eq: IPP 1}
&& \int_{-T}^0 dt    \int_0^1 dx \;  \gs(\gr)  H^\prime h^\prime =
- \int_{-T}^0 dt    \int_0^1 dx \;  \left[  \gp \partial_t H   -  \gp' \, H'  +  \gp \gs'(\gr) ( H^\prime )^2   \right]   \\
&& \qquad  +  \int_0^1 dx \; \gp (0,x) H (0,x) 
+ \int_{-T}^0 dt \,  H(t,1) j (t,1) -  H(t,0) j (t,0)   \, , \nonumber 
\end{eqnarray}
where $j(t,x) = - \gp^\prime (t,x)   + \gs^\prime (\gr(t,x)) H^\prime (t,x)  \gp(t,x) + \gs(\gr(t,x)) h^\prime (t,x)$ is the current variation
$q \to q +j$.
The current $j(t,x)$ is continuous at the battery 
(this can be checked by combining the condition $\int_0^1 \, dx \, \gp(t,x) = 0$ and \eqref{eq: h, phi}
which says that $\partial_t \gp = - j^\prime$).
As $H$ is also continuous at the battery \eqref{eq: H continuous}, the last boundary term in \eqref{eq: IPP 1} vanishes.
One has
\begin{eqnarray*}
&& \int_{-T}^0 dt    \int_0^1 dx \;   \gs(\gr)  H^\prime h^\prime  = 
\int_{-T}^0 dt    \int_0^1 dx \;  \left[  \gp \left( - \partial_t H   - \partial_x^2 H  -  \gs'(\gr) ( H^\prime )^2 \right)  \right]  \\
&& \qquad \qquad 
+  \int_0^1 dx \; \gp (0,x) H (0,x) + \int_{-T}^0 dt \,   \left[  H^\prime (t,1) \gp (t,1) - H^\prime (t,0) \gp (t,0) \right]  \, . \nonumber 
\end{eqnarray*}
Thus for any perturbation $\gp$, the condition \eqref{eq: euler 0} can be rewritten
\begin{eqnarray}
&&  \int_{-T}^0 dt    \int_0^1 dx \;  \left[  \gp \left(  \partial_t H  + \partial_x^2 H  + \frac{1}{2}  \gs'(\gr) ( H^\prime )^2 \right)  \right]
+ \int_0^1 dx \, \big( \gl(x) - H(0,x) \big) \gp(0,x)   \nonumber \\
&& \qquad 
- \int_{-T}^0 dt \,   \left[  \gs(\gr(t,1)) H^\prime (t,1)  - \gs(\gr(t,0)) H^\prime (t,0)  \right] \frac{\gp (t,0)}{\gs(\gr(t,0))}
= 0 \, ,
\label{eq: euler 1}
\end{eqnarray}
where we used the identity \eqref{eq: linear bd conditions}.

Combining \eqref{eq: euler 1} and \eqref{eq: H, rho}, the evolution equations for the optimal trajectory in \eqref{eq: G(lambda)} are
\begin{eqnarray}
\label{eq: optimal evolution}
\forall t \in [-T,0], \qquad
\begin{cases}
\partial_t \gr  = \partial_x^2 \gr - \partial_x \Big( \gs(\gr )  H^\prime  \Big) \\
\partial_t H  =  - \partial_x^2 H - \frac{1}{2} \gs'(\gr) \big( H^\prime \big)^2 
\end{cases}
\end{eqnarray}
with boundary conditions
\begin{eqnarray}
\label{eq: bd conditions time}
\forall x \in [0,1], \qquad \gr(-T,x) = \bar \gr (x), \qquad
H (0,x) = \lambda (x) \, .
\end{eqnarray}
We also have that at the battery, for any time $t<0$,
\begin{eqnarray}
\label{eq: bd conditions rho}
p \; \gr(t,1) (1-\gr (t,0)) = q \; \gr (t,0) (1- \gr (t,1) ), \qquad
\gr^\prime (t,0) = \gr^\prime (t,1) \, ,
\end{eqnarray}
\begin{eqnarray}
\label{eq: additional bd conditions}
H(t,0)=H(t,1), 
\qquad 
\gr(t,1) (1-\gr (t,1)) H' (t,1) = \gr(t,0) (1-\gr (t,0)) H' (t,0) \, .
\end{eqnarray}
The boundary conditions follow from \eqref{eq: bd conditions}, \eqref{eq: H continuous} and \eqref{eq: euler 1}. The continuity of $\gr^\prime$ 
in \eqref{eq: bd conditions rho}  is a consequence of the continuity of the current \eqref{eq: current} 
 at the battery and of \eqref{eq: additional bd conditions}. We stress the fact that $H (0,x) =\gl(x)$ in \eqref{eq: bd conditions time} may not satisfy the boundary conditions \eqref{eq: additional bd conditions}.
 
The bulk evolution equations \eqref{eq: optimal evolution} have already  been derived in previous works \cite{BDGJL2, BDLvW, DG}, but the
boundary conditions \eqref{eq: bd conditions rho}, \eqref{eq: additional bd conditions} are specific to the battery model.

\subsection{Linearized equations}

The  two-point correlations in the steady state can be obtained by taking the second derivative of $\cG(\gl)$ at $\gl =0$ (see \eqref{eq: G(lambda) 0})
 \cite{BDGJL2, D2}. This can be understood as follows. For a given $N$, one has at the second order in $\gl$ 
\begin{eqnarray}
\label{eq: dvp discret G}
&& \frac{1}{N} \log \left \bra \exp \left( \sum_{i=1}^N \gl  \left( \frac{i}{N} \right) \eta_i \right)  \right \ket  \\
&& \quad = 
\frac{1}{N} \sum_{i=1}^N \left(  \gl  \left( \frac{i}{N} \right) \left \bra  \eta_i  \right \ket  
+ \frac{1}{2} \gl  \left( \frac{i}{N} \right)^2  \bra \eta_i \ket (1- \bra \eta_i  \ket) \right) 
+ \frac{1}{2 N} \sum_{i \not = j}^N  \gl  \left( \frac{i}{N} \right) \gl  \left( \frac{j}{N} \right) \left \bra  \eta_i ; \eta_j  \right \ket
+ o(\gl^2)  \, , \nonumber
\end{eqnarray}
where $\bra  \eta_i ; \eta_j  \ket = \bra  \eta_i \eta_j  \ket - \bra  \eta_i  \ket \, \bra  \eta_j  \ket$ is the connected two-point correlation function.
The above expansion amounts to perturbing the system around the steady state $\bar \gr(x)$ \eqref{eq: bar rho}.

For $N$ large, one expects from \eqref{eq: G(lambda) 0} and \eqref{eq: dvp discret G} that 
 \begin{eqnarray}
\label{eq: G(lambda) 1}
\cG(\gl)
=  \int_0^1 dx \;   \bar \gr(x) \gl (x)+ \bar \gr(x) (1-\bar \gr(x))  \frac{\gl (x)^2}{2} +  \int_0^1 dx \int_x^1  dy \, C(x,y) \gl (x) \gl (y)  \, ,
\end{eqnarray}
where $C(x,y)$ is the macroscopic non-equilibrium two-point correlation function
 \begin{eqnarray}
\label{eq: micro-macro}
\bra  \eta_i ; \eta_j  \ket = \frac{1}{N} C \left( \frac{i}{N} , \frac{j}{N} \right)  \, ,
\end{eqnarray}
at the leading order in $N$.
$C(x,y)$ will be computed in \eqref{eq: C(x,w)}.
We stress the fact that the equality \eqref{eq: G(lambda) 1} rests on the assumption that one can interchange the limits
$N \to \infty$ and $\gl \to 0$.
 
Our goal now is to compute  $\cG(\gl)$ to the second order in $\gl$ from the variational problem  \eqref{eq: G(lambda)}
and to deduce from it $C(x,y)$ \eqref{eq: G(lambda) 1}.
To any $\gl$, one can associate an optimal density  $\gr_\gl$ which minimizes the variational problem  \eqref{eq: G(lambda)}. 
By construction the density at time $0$ satisfies $\gr_\gl (0,x) = \frac{\delta  \cG(\gl)}{\delta \gl}$ so that 
expanding $\gr_\gl (0,x)$ to the first order in $\gl$ will lead to the second derivative of $\cG(\gl)$. 
One can see from \eqref{eq: G(lambda) 1} that
\begin{eqnarray}
\label{eq: expansion}
\gr_\gl (0,x) = \bar \gr(x) + \bar \gr(x) (1-\bar \gr(x)) \gl (x) + \int_0^1 dy \, C(x,y)  \gl (y) \, .
\end{eqnarray}


To determine $\gr_\gl$ for small $\gl$, we linearize the optimal evolution \eqref{eq: optimal evolution} around the steady state profile $\bar \gr$
\eqref{eq: bar rho}.
For the sake of notation, we drop the subscript $\lambda$ in $\gr_\gl$  and decompose the trajectory as $\gr_\gl = \bar \gr + \gp$ and $H = h$, 
where $\gp$ and $h$ are small. 
At the first order, \eqref{eq: optimal evolution} becomes
\begin{eqnarray}
\label{eq: linear evolution}
\forall t \in [-T,0], \qquad
\begin{cases}
\partial_t \gp  = \partial_x^2 \gp - \partial_x \Big( \gs(\bar \gr )  h^\prime  \Big) \\
\partial_t h  =  - \partial_x^2 h 
\end{cases}
\end{eqnarray}
and the boundary conditions \eqref{eq: bd conditions time}, \eqref{eq: bd conditions rho}, \eqref{eq: additional bd conditions}
lead to 
\begin{eqnarray}
\label{eq: bd conditions time linear}
\forall x \in [0,1], \qquad \gp(-T,x) = 0, \qquad
h (0,x) = \lambda (x) \, ,
\end{eqnarray}
and
\begin{eqnarray}
\label{eq: bd conditions phi}
\frac{\gp(t,1)}{\gs \big( \grm \big) } =
\frac{\gp(t,0)}{\gs \big( \grp \big) }, 
\qquad \gp^\prime (t,0) = \gp^\prime (t,1) \, ,
\end{eqnarray}
\begin{eqnarray}
\label{eq: bd conditions h}
h (t,1) =  h (t,0) , \qquad 
\gs \big( \grm \big) \, h' (t,1) = \gs \big( \grp \big) \, h' (t,0) \, .
\end{eqnarray}
Note that \eqref{eq: bd conditions phi} has been obtained by linearizing \eqref{eq: bd conditions rho} and that 
$h$ is defined up to a constant which does not influence the evolution of the density.

\bigskip

The coupled equations \eqref{eq: linear evolution} can be solved by integrating the heat equation.
We first start with the second equation.  
The Green's function $G^2$ associated to the Laplacian with boundary conditions \eqref{eq: bd conditions h}
satisfies for any $y$ in $(0,1)$
\begin{eqnarray}
\partial_t G^2_t (x,y) = \partial_x^2 G^2_t (x,y) ,
\quad {\rm with} \qquad
G^2_t (1,y) =  G^2_t (0,y), \quad  a \partial_x G^2_t (1,y) = \partial_x G^2_t (0,y),
\label{eq: bords  G2}
\end{eqnarray}
where we set 
\begin{eqnarray}
\label{eq: a}
a = \frac{ \gs( \grm ) }{\gs( \grp )} \, .
\end{eqnarray}
The function $h$ follows the backward  heat equation on $[-T,0]$ with final data $h(0,y) = \gl(y)$ at time $0$.
Therefore, we can write
\begin{eqnarray*}
\forall t \in [-T,0], \qquad
h(t,x)   =   \int_0^1 G^2_{-t} (x,y)  \gl(y) \, dy  \, .
\end{eqnarray*}

We turn now to the evolution of $\gp$ in \eqref{eq: linear evolution} which follows the heat equation with a source term depending on $h$.
The Green's function $G^1$ associated to the Laplacian with boundary conditions \eqref{eq: bd conditions phi} satisfies for $y$ in $(0,1)$
\begin{eqnarray}
\partial_t G^1_t (x,y) = \partial_x^2 G^1_t (x,y) ,
\quad {\rm with} \qquad
G^1_t (1,y) = a  G^1_t (0,y), 
\quad   \partial_x G^1_t (1,y) = \partial_x G^1_t (0,y) \, .
\label{eq: bords G1}
\end{eqnarray}
Given $\gp(-T,y) = 0$ at time $-T$ 
\begin{eqnarray*}
\forall t \in [-T,0], \qquad
\gp (t,x)  = - \int_{-T}^t \, ds \int_0^1 \, dy  \; G^1_{t-s} (x,y) \partial_y \Big( \gs(\bar \gr (y) ) \partial_y h (s,y)  \Big) \, .
\end{eqnarray*}

\medskip

One can check directly that the Green's function solution of  \eqref{eq: bords G1} is given by 
\begin{eqnarray}
\label{eq: G1 0}
G^1_t (x,y) 
&=&  \frac{2}{a+1}  \ga(x)  + \frac{4}{a+1} \sum_{k = 1}^\infty   \exp \big(  - (2 k \pi)^2 t \big)
\Big(   \ga(x)  \cos ( 2 \pi k x) \cos ( 2 \pi k y)  +\\
&&  \ga(1-y)  \sin ( 2 \pi k x) \sin ( 2 \pi k y)  + 4 k \pi (1-a) t \sin ( 2 \pi k x) \cos ( 2 \pi k y)   \Big)  \, , \nonumber
\end{eqnarray}
with $\ga(x) = a x + 1 -x$ and $a = \frac{ \gs( \grm ) }{\gs( \grp )}$. 
One can also check that the Green's function $G^2$ which solves \eqref{eq: bords  G2}  is given by
\begin{eqnarray}
G^2_t (x,y) =  G^1_t (y,x) \, .
\label{eq: G1,G2}
\end{eqnarray}
In appendix I,  the construction of the Green's functions is explained.

\bigskip

We are going to relate $\gp(0,x)$ and $\gl(x)$. 
The density fluctuation at the final time reads
\begin{eqnarray*}
\gp (0,x)  =  -
\int_{-T}^0 \, ds \int_0^1  dy \int_0^1  dw \,  G^1_{-s} (x,y) \partial_y \Big( \gs(\bar \gr (y) ) \partial_y  G^2_{-s} (y,w)  \Big) \gl(w) \, ,
\end{eqnarray*}
which is equivalent to
\begin{eqnarray}
\gp (0,x)  =  - \int_0^1  dw \left[ \int_0^T \, ds \int_0^1  dy  \,
 G^1_s (x,y) \partial_y \Big( \gs(\bar \gr (y) ) \partial_y  G^2_s (y,w)  \Big) \right] \;  \gl(w) \, .
\label{eq: phi, h}
\end{eqnarray}
In appendix II, it is shown that \eqref{eq: phi, h} can be rewritten as
\begin{eqnarray}
\label{eq: contrainte } 
\gp (0,x) = \frac{1}{2} \gs(\bar \gr (x) )  \gl (x) + \int_0^1 dy \, C(x,y)  \gl (y) \, , 
\end{eqnarray}
where the long range correlations are given by  
\begin{eqnarray}
\label{eq: C(x,w)} 
 C(x,y) &=&  
 - \frac{2}{(a+1)^2}  \left( \frac{\gs(\grp ) + \gs(\grm)}{2} +  \frac{\cJ^2}{3} \right)  (ax + 1-x) (ay + 1-y) \\
&& \qquad -  2 \cJ^2 \, \int_0^\infty ds  \int_0^1 dz \,   G^1_s (x,z)  G^1_s (y,z) -   G^1_s (x,0)  G^1_s (y,0) \, . \nonumber
\end{eqnarray}
Since $\gr_\gl (0,x) = \bar \gr(x) +\gp(0,x)$, this determines the correlation function in \eqref{eq: expansion}.

\section{The macroscopic approach for the slowly varying field}
\label{sec: The macroscopic approach for the slowly varying field}

In section \ref{sec: The case of a slowly varying field}, a microscopic dynamics driven by a weak field $\mmu(x)$ was introduced.
We consider now a general diffusive system at a macroscopic level and write down the two-point correlation function which generalizes (\ref{eq: C(x,w)}) to an arbitrary weak field $\mmu(x)$.
In the limit where $\mmu(x)$ becomes a $\gd$-function, we will see that 
one can  recover the correlations of the battery model  (\ref{eq: C(x,w)}).

The main advantage of the calculations of this section is that the effect of the battery is smoothened over the whole system 
through the function $\mmu(x)$, so that there are no longer boundary conditions such as (\ref{eq: bd conditions rho}) or (\ref{eq: additional bd conditions})  at the battery and the functions $H(t,x)$ and $\rho(t,x)$ become  simply periodic functions of the space variable $x$.
One difficulty is that for a generic $\mmu(x)$ the steady state profile $ \bar \rho(x)$ and the Green's functions $G^1 $ and $G^2$ are not known explicitly  and we have to make a few assumptions on the convergence of hydrodynamic equation \eqref{eq: hydro E}
 to the steady state $\bar \rho(x)$ or on the long time behavior of the Green's functions \eqref{B-sol}.

\subsection{The variational problem}

Our starting point is  that the large deviation functional  (\ref{eq: mother}) of a macrosopic trajectory  $(\gr(t,x), q(t,x))_{0 \leq t \leq T}$
  which, for a general diffusive system is characterized by two functions $D(\rho)$ and $\sigma(\rho)$, 
  takes the following form  \cite{BDGJL2006, BD2007}   in presence of a driving field $\mmu$
\begin{eqnarray*}
\label{eq: mother2}
&& {\hat {\mathcal F}^\mmu }_{[0,T]} \big( \gr,q \big) =
\int_0^T  dt  
\int_0^1  dx  \frac{\Big( q(t,x)  
+ D(\gr(t,x))\partial_x \gr (t,x)  - \mmu(x) \sigma(\gr(t,x))\Big)^2}{2 \, \gs(\gr(t,x))} \, ,
\end{eqnarray*}
The SSEP with a weak field introduced in section \ref{sec: The case of a slowly varying field} corresponds 
to the functions $D(\rho) = 1$ and $\sigma(\rho) = 2 \gr (1-\gr)$.
Following exactly the same steps as in subsection 3.1, one gets, by optimizing over the current $j(t),$ defined in 
(\ref{eq: conservation law 2}),  that (\ref{eq: current})  becomes
\begin{eqnarray}
\label{eq: current1}
q(t,x) = -  D(\gr(t,x)) \; \partial_x  \gr (t,x)  + \gs(\gr(t,x))  [\mmu(x)+ \partial_x H(t,x)]  \, ,
\end{eqnarray}
where $H(t,x)$ is  periodic in space  ($H(t,x)=H(t,x+1)$). This implies that  the  time dependent density profile $\gr$  is related to $H$  by
\begin{eqnarray}
  \label{eq: H, rho1}
\partial_t \gr (t,x) =  \partial_x \Big(D(\gr(t,x))\;  \partial_x \gr (t,x) \Big) - \partial_x \Big( \gs(\gr(t,x)) \; [ \mmu(x) +  \partial_x H(t,x)] \Big) \, ,
\end{eqnarray}
(instead of (\ref{eq: H, rho}))
and the density large deviation functional  is given by  
\begin{eqnarray}
\label{eq: fonc density1}
{\mathcal F}^\mmu_{[0,T]} \big( \gr \big) 
= \frac{1}{2} \int_0^T dt  \int_0^1 dx  \;  \gs(\gr(t,x)) \big(\partial_x H (t,x) \big)^2 \, ,
\end{eqnarray}
as in (\ref{eq: fonc density}).

\medskip

To evaluate  $\mathcal G$ defined  in (\ref{eq: G(lambda) 0}), (\ref{eq: G(lambda)}) we proceed as in subsection \ref{subsec: Steady state large deviations} and determine the optimal dynamical fluctuation starting from the steady state $\bar \gr(x)$ given as the solution of 
\begin{equation}
\label{steady-state}
-D \big( \bar \gr(x) \big) \bar \gr' (x) + \mmu(x) \gs \big( \bar \gr(x) \big) =  \cJ \, ,
\end{equation}
where $\cJ$ is the steady state current.
In Appendix III, we check that for regular coefficients  $\sigma(\rho),D(\rho) >0$ and $\mmu(x)$ considered here \eqref{steady-state} has a unique solution.
We shall further assume that the hydrodynamic evolution 
\begin{eqnarray}
\label{eq: hydro E}
\partial_t \gr  = \partial_x \Big(D(\rho)   \partial_x\gr  - \mmu(x) \gs(\gr)   \Big) \, ,
\end{eqnarray}
converges to  the steady state profile $\bar \rho(x)$ \eqref{steady-state}.

\medskip

As in subsection \ref{subsec: Steady state large deviations}, one considers a small variation
 $\rho \to \rho+ \varphi, H \to H+h$ of  $\rho(t,x)$ and of $H(t,x)$
and  one gets from (\ref{eq: H, rho1}) that (\ref{eq: h, phi}) becomes
\begin{eqnarray}
\label{eq: h, phi1}
\partial_t \gp  = \partial_x^2 \Big(D(\gr) \gp \Big)  - \partial_x \Big(\mmu(x) \gs'(\rho) \gp +  \gs'(\gr) H^\prime \, \gp  + \gs(\gr) h^\prime \Big) \, .
\end{eqnarray}
Then following exactly the same steps as in subsection \ref{subsec: Steady state large deviations}, 
one ends up with (\ref{eq: optimal evolution}) replaced by 
\begin{eqnarray}
\label{eq: optimal evolution1}
&& \partial_t \gr  = \partial_x \Big(D(\rho)   \partial_x\gr  - \mmu(x) \gs(\gr) - \gs(\gr )    \partial_x H  \Big) \\
\label{eq: optimal evolution2}
&& \partial_t H  =  - D(\rho) \partial_x^2 H - \mmu(x) \gs'(\gr)  \partial_x H-\frac{1}{2} \gs'(\gr) \big( \partial_x  H \big)^2 
\end{eqnarray}
with the boundary conditions at times $-T$ and 0 as in (\ref{eq: bd conditions time}) 
\begin{eqnarray}
\label{eq: bd conditions time1}
\forall x \in [0,1], \qquad \gr(-T,x) = \bar \gr (x), \qquad
H (0,x) = \lambda (x) \, .
\end{eqnarray}
The spatial boundary conditions (\ref{eq: bd conditions rho}), (\ref{eq: additional bd conditions}) are now
replaced by the requirement that $H(t,x)$ and $\rho(t,x)$ are smooth periodic functions of the space variable $x$ (except possibly at time $t=0$).

\subsection{Small variations of the density}

In order to determine the correlation function $C_\mmu (x,y)$ as in (\ref{eq: expansion}) we need to solve the above equations 
(\ref{eq: optimal evolution1} -- \ref{eq: bd conditions time1}) to first order in $\lambda(x)$, that is to first order in
$\gp= \rho-\bar \gr$ and $h= H$. 
Linearizing  (\ref{eq: optimal evolution1}), (\ref{eq: optimal evolution2}), one gets
\begin{eqnarray}
\label{eq: optimal evolution3}
&& \partial_t \gp  = \partial_x \Big( \partial_x(D( \bar \rho)   \gp)  - \mmu(x) \gs'(\bar \gr) \gp - \gs(\bar \gr)    \partial_x h \Big) \\
\label{eq: optimal evolution4}
&& \partial_t h  =  - D(\bar \rho) \partial_x^2 h - \mmu(x) \gs(\bar \rho)'  \partial_x h
\end{eqnarray}
As in section \ref{sec: Macroscopic approach}, 
these equations can be solved in two steps. 
As the equation for $h$ does not involve $\gp$, it is determined in terms of the Green's function $G^2$ as
\begin{eqnarray}
\forall t \in [-T,0], \qquad
h(t,x)   =   \int_0^1 G^2_{-t} (x,y)  \lambda(y) \, dy  \, .
\label{h-sol}
\end{eqnarray}
where $G^2$ is solution of
\begin{eqnarray}
\partial_t G^2_t (x,y) 
= D(\bar \gr(x)) \; 
\partial_x^2 G^2_t (x,y) 
 +   \mmu(x)  \sigma' (\bar \gr(x))  \; \partial_x G^2_t (x,y)   \ ,
 \label{eq: nouveau G2}
\end{eqnarray}
with $G^2_0 (x,y) = \delta_{x=y}$.
To solve for $\gp$, it is convenient to write $\gp$ as
\begin{equation}
\gp(t,x)= {\sigma(\bar{\rho}(x)) \over 2 D(\bar{\rho}(x))} h(t,x) + \psi(t,x) \, .
\label{psi-def}
\end{equation}
 The evolution equation of $\psi$  can be determined from 
(\ref{eq: optimal evolution2})
\begin{equation}
\partial_t \psi = \partial_x^2 [ D \psi ] -  \partial_x [\mmu \sigma' \psi]  +  h \partial_x  \left[{\sigma' \bar \rho' \over 2}  -{\mmu \sigma \sigma' \over 2 D} \right]
\nonumber
\end{equation}
where the functions $D,\gs,\gs'$ are evaluated at density $\bar \rho(x)$ and
$\mmu$ at position $x$.
Using the fact that the steady state  profile $\bar \rho(x)$
satisfies (\ref{steady-state}), 
this becomes 
\begin{equation}
\partial_t \psi =\partial_x^2 [ D \psi ] -  \partial_x  [\mmu \sigma' \psi]  -  \cJ  h \partial_x \left[{ \sigma' \over 2 D} \right] \, .
\label{psi-eq}
\end{equation}

We introduce the Green's function $G_t^1$ as the  solution of
\begin{eqnarray}
\partial_t G^1_t (x,y) 
=
\partial_x^2 
 \Big( D(\bar \gr(x)) \; 
G^1_t  (x,y) \Big) -  \partial_x \Big( \mmu(x)  \sigma' (\bar \gr(x))  
G^1_t (x,y)  
\Big) \  ,
\label{eq: nouveau G1}
\end{eqnarray}
with $G^1_0 (x,y) = \delta_{x=y}$.
Then one can solve  (\ref{psi-eq}) 
\begin{eqnarray}
\label{gp-sol}
\gp(x,0)= {\sigma(\bar{\rho}(x)) \over 2 D(\bar{\rho}(x))} \lambda(x) - 
\cJ \int_{0}^T  dt \int d z \int dy  \; G_t^1(x,z)    \partial_z  \Big( { \sigma'(\bar \rho(z)) \over 2 D(\bar \rho(z))} \Big) G_t^2(z,y)  \lambda(y) \\
- \int dz  \int dy \; G_T^1(x,z) {\sigma(\bar{\rho}(z)) \over 2 D(\bar{\rho}(z))} G_T^2(z,y) \lambda(y) \, .
\nonumber
\end{eqnarray}

We assume for all $x$ the convergence when $T \to \infty$
\begin{equation}
\label{B-sol}
G_T^1(x,z) =  G_T^2(z,x) \to B(x) \, .
\end{equation}
For the battery model $B(x)$ has the explicit form $\frac{2  \ga(x)}{a+1}$ (see \eqref{eq: G1 0}).
Taking the limit $T \to \infty$ in (\ref{gp-sol}) gives the correlation functions
\begin{equation*}
C_\mmu (x,y) =- B(x) B(y) \int dz  \;  {\sigma(\bar{\rho}(z)) \over 2 D(\bar{\rho}(z))}
 -  \cJ  \int_{0}^\infty  dt \int d z   \; G_t^1(x,z)    \partial_z  \Big( { \sigma'(\bar \rho(z)) \over 2 D(\bar \rho(z))} \Big) G_t^2(z,y)  \, .
\end{equation*}
As for the battery model, the Green's functions satisfy at any time $t$  the symmetry property \eqref{eq: G1,G2} (see the proof below)
\begin{eqnarray}
\label{eq: symm cas general}
G_t^2 (x,y)= G_t^1 (y,x) \, .
\end{eqnarray}
Thus we finally get the macroscopic expression for the two point correlation functions with a weak field
\begin{equation}
\label{C-sol}
C_\mmu (x,y) =- B(x) B(y) \int dz    {\sigma(\bar{\rho}(z)) \over 2 D(\bar{\rho}(z))}
 -  \cJ  \int_{0}^\infty  dt \int d z   \; G_t^1(x,z) G_t^1 (y,z)    \partial_z \Big( { \sigma'(\bar \rho(z)) \over 2 D(\bar \rho(z))} \Big)   \, .
\end{equation}

\bigskip

We sketch a proof of the symmetry \eqref{eq: symm cas general} which follows from the fact that the operators associated to the evolutions
\eqref{eq: nouveau G2}, \eqref{eq: nouveau G1} are adjoint. A solution $f(t,x)$ of
$$
\partial_t f(t,x)  = D(\bar{\rho}(x)) \partial_x^2 f(t,x)
 + \mmu(x) \sigma'(\bar{\rho}(x))  \partial_x f(t,x)  
$$
with initial condition $f(0,x)$, is, by definition of $G^2$  \eqref{eq: nouveau G2}, equal to
$$f(t,x) = \int dy G_t^2 (x,y) f(0,y) \, . $$
Now using the fact that
$$f(t+dt,x) = \int dy G_{t+dt}^2 (x,y) f(0,y) = \int dy G_t^2 (x,y) f(dt,y) \, ,$$
one  can show after an integration by parts that
$$
\partial_t G^2  = \partial_y^2 [D(\bar{\rho}(y)) G^2 ] - \partial_y [ \mmu(y) \sigma'(\bar{\rho}(y))  G^2] \, . 
$$
Therefore $G_t^2(x,y)$ evolves according to the same equation as $G_t^1(y,x)$ \eqref{eq: nouveau G1} and they coincide at time $t=0$ ($G_0^2(x,y)=G_0^1(y,x)=\delta_{x=y}$). Thus one concludes that they are identical.

\subsection{The case of a battery}

To represent our model of a battery as a limit of the smooth continuum problem discussed above, we consider the field $\mmu(x)$ 
localized in a certain  region of size $\Delta x$ around the origin  with $\mmu(x)=0 $ outside the  region 
$[1- \frac{\Delta x}{2},1] \cup [0, \frac{\Delta x}{2}]$ (where $1$ is identified to $0$ on the ring). 
Then one  should take  the difference of the chemical potentials 
$\mu(\frac{\Delta x}{2})$, $\mu(1-\frac{\Delta x}{2})$ at the edge of the battery to be fixed.  
Using the relation between the chemical potential and the density of systems in local equilibrium, one has
\begin{equation}
\mu \left( \frac{\Delta x}{2} \right) - \mu \left( 1-\frac{\Delta x}{2} \right)
=
\int_{\rho(t,1-\frac{\Delta x}{2})}^{\rho(t,\frac{\Delta x}{2})} {2 D(\rho ) \over \sigma(\rho)} d \rho=  K' \, ,
\label{bat-Delta-x}
\end{equation}
where the constant $K'$ is a characteristic 
of the battery which  remains fixed  as $\Delta x \to 0$.

For $K'>0$,  the picture is that for $\Delta x$ small, the density has an abrupt increase for $x$ in  the region where the field is localized 
and a slow decay for $\frac{\Delta x}{2} < x < 1- \frac{\Delta x}{2}$.
In the limit $\Delta x \to 0$, the steady state profile $\bar \rho(x)$  satisfies $- D(\bar \rho(x)) \bar \rho'(x) = \cJ$ for $0<x<1$ with a jump from 
$\grm = \bar\rho(1)$ to $\grp = \bar \rho(0)$. The current $\cJ$ and the densities $\grm$ and $\grp$ at the edges of the battery then satisfy
$$ 
\int_{\grm}^{\grp} {2 D(\rho) \over \sigma(\rho)} d \rho = K',
 \ \ \ \ \ \  \int_{0}^{1}  \bar \rho(x) dx = \bar \rho,
  \ \ \ \ \int_{\grm}^{\grp}  D(\rho)  d \rho = \cJ \, .
$$
If we assume that in the limit $\Delta x \to 0$, the Green's function $G_t^1(x,z)$  converges to $\hat G_t^1(x,z)$
 then expression (\ref{C-sol}) becomes 
\begin{eqnarray}
\hat C (x,y) &=& - B(x) B(y) \int dz    {\sigma(\bar{\rho}(z)) \over 2 D(\bar{\rho}(z))}
 - \cJ \int_{0}^\infty  dt \int_0^1 d z   \; \hat G_t^1(x,z)   \partial_z \Big( { \sigma'(\bar \rho(z)) \over 2 D(\bar \rho(z))} \Big) \hat G_t^1(y,z) \nonumber \\ 
 && \qquad   - \cJ \int_{0}^\infty  dt   \left[  { \sigma'(\grp)  \over 2 D(\grp)} -   { \sigma'(\grm)  \over 2 D(\grm)}  \right] 
\hat G_t^1(x,0) \; \hat G_t^1(y,0) \, .
\label{C-sol1}
\end{eqnarray}
It is then easy to check that this expression of the pair correlation functions, for a general diffusive system, reduces to the one of the battery model (\ref{eq: C(x,w)}) when one takes $D(\rho)=1$ and $\sigma(\rho)=2 \rho(1- \rho)$.

\section{Comparison with the open system and with numerical simulations}
\label{sec: Comparison with the open system and with numerical simulations}

\subsection{Comparison with the open system}

For open diffusive non-equilibrium systems, the steady state has long range correlations which scale like $1/N$ for a one-dimensional chain of length $N$. 
The two-point correlation function of the SSEP in contact with reservoirs at densities $\gr_a$, $\gr_b$ is given by
\cite{Spohn, DLS}
\begin{eqnarray*}
1 \leq i < j \leq N, \qquad 
\bra \eta_i \, ; \eta_j \ket_{\rm{open}} = \frac{1}{N} C_{\rm{open}} \left( \frac{i}{N} , \frac{j}{N}  \right) \, ,
\end{eqnarray*}
with
\begin{eqnarray}
\label{eq: C open} 
x< y, \qquad
C_{\rm{open}}(x,y) = - \cJ_{\rm{open}}^2 \;  x (1-y) \, ,
\end{eqnarray}
where $\cJ_{\rm{open}} = \gr_a -\gr_b$ stands for the steady state current. 

The SSEP on a ring driven by a battery can be interpreted as the canonical non-equilibrium counterpart of the SSEP driven by reservoirs. By tuning the mean density $\bar \gr$ and the asymmetry $(p,q)$, the boundary conditions at the battery 
$\gr^+,\gr^-$ \eqref{eq: gr+ gr-} can be chosen to be equal to $\gr_a,\gr_b$ so that both systems have the same steady state density profile and the same steady state currents $\cJ = \cJ_{\rm{open}}$. 
In both models local equilibrium holds and implies that at the leading order both steady state statistics are given locally by a product Bernoulli measure with the same density. In this sense, equivalence  of ensembles is also satisfied in non-equilibrium. 
In order to understand more precisely the interplay between the non-equilibrium correlations, the canonical constraint and the driving mechanism we are going to compare the correlation functions \eqref{eq: C(x,w)} and \eqref{eq: C open} (as well as \eqref{C-sol}).

\medskip

Expression \eqref{eq: C open} is related to the inverse of the Laplacian $\gD^{\rm{Dir}}$ in $[0,1]$ with Dirichlet boundary conditions.
Let $G^{\rm{Dir}}_s = \exp ( - s \gD^{\rm{Dir}})$ be the corresponding Green's function, then we know from
\cite{Spohn} that 
\begin{eqnarray*}
C_{\rm{open}}(x,y)  = -  \cJ^2 \big( \gD^{\rm{Dir}} \big)^{-1} (x,y) \, .
\end{eqnarray*}
This can be rewritten as
\begin{eqnarray*}
C_{\rm{open}}(x,y) 
= - 2 \cJ^2 \int_0^\infty ds \,  G^{\rm{Dir}}_{2s} (x,y)  
=  - 2 \cJ^2 \int_0^\infty ds  \int_0^1 dz \,   G^{\rm{Dir}}_s (x,z)  G^{\rm{Dir}}_s (y,z) \, ,
\end{eqnarray*}
where $\cJ = \cJ_{\rm{open}}$.
This is reminiscent of the expressions derived for the battery model  \eqref{eq: C(x,w)}  and the weak field \eqref{C-sol} with two notable differences. The Green's functions are not associated to the same operator (they differ by their boundary conditions or the bulk field). 
In \eqref{eq: C(x,w)} the contribution of the battery is
given by the term $G^1_s (x,0)  G^1_s (y,0)$ which is absent in the open case.
For a general field the current $\cJ$ is replaced by $\partial_z  \big( { \sigma'(\bar \rho(z)) \over 2 D(\bar \rho(z))} \big)$ in \eqref{C-sol}.

\medskip

The response to a small drive of the open and close systems is different.
For a small current $\cJ$,  when the mean density $\bar \gr \not = 1/2$ then the long range correlation \eqref{eq: C(x,w)}
 is of order $\cJ$. Instead  \eqref{eq: C open}  scales like $\cJ^2$.
To see this, we first note that 
in the battery model the mean density is linear and $\int_0^1 dx \, \bar \gr(x) = \frac{\grm + \grp}{2} = \bar \gr$. Thus
\begin{eqnarray}
\label{eq: relation 1 0}
\grm   = \bar \gr- \frac{\cJ}{2}, \qquad 
\grp = \bar \gr +  \frac{\cJ}{2} \, ,
\end{eqnarray}
where the mean current is given by $\cJ =  \grp  -  \grm$.
Expanding \eqref{eq: C(x,w)} to the first order in $\cJ$ leads to 
\begin{eqnarray}
\label{eq: C(x,w) first order} 
 C(x,y) = - \frac{\gs(\bar \gr)}{2} \left( 1 + \frac{\gs'(\bar \gr)}{\gs(\bar \gr)} \cJ (1 - x - y) \right) \, ,
\end{eqnarray}
where we used the fact that $a = 1 - \frac{\gs'(\bar \gr)}{\gs(\bar \gr)} \cJ$  to first order in $\cJ$.
We note that this coincides with the first order expansion of the correlations \eqref{eq: correlation independant}
for the canonical measure associated to the product measure. 
At the order $\cJ^2$ the non equilibrium contributions appear and \eqref{eq: C(x,w)}  no longer matches with
\eqref{eq: correlation independant}.

\subsection{The battery model at density $1/2$}

An explicit expression of the two-point correlations can be obtained by plugging in \eqref{eq: C(x,w)} the expression \eqref{eq: G1 0} of $G^1_s$. 
For general mean density, this expression is rather complicated. However, several simplifications occur when the mean density is equal to $1/2$ and in the rest of this section we will focus on this case.

At mean density equal to $1/2$, the relation \eqref{eq: relation 1 0} implies that the densities at the battery satisfy $\grm = 1 - \grp$ so that $a = \frac{ \gs(\grm)}{\gs(\grp)} =1$ and the boundary conditions \eqref{eq: bords G2}, \eqref{eq: bords G1} simplify. 
In particular $G^2_t (x,y) =  G^1_t (x,y)$ and the Green's functions are associated to the Laplacian on the ring $[0,1]$
\begin{eqnarray*}
G^1_t (x,y) 
=  1  + 2 \sum_{k = 1}^\infty   \exp \big(  - (2 k \pi)^2 t \big) \cos ( 2 \pi k ( x- y) )  \, .
\end{eqnarray*}
As $G^1_t (y,z)  = G^1_t (z,y)$, one gets
\begin{eqnarray*}
&& \int_0^\infty ds  \int_0^1 dz \,   G^1_s (x,z)  G^1_s (z,y) -   G^1_s (x,0)  G^1_s (y,0) 
= \int_0^\infty ds  \,   G^1_{2 s} (x,y)  -   G^1_s (x,0)  G^1_s (y,0) \nonumber \\
&& = \sum_{k \geq 1} \frac{\cos \big(  2 \pi k (x-y) \big)}{(2 k \pi)^2}
- 2 \sum_{k = 1}^\infty  \frac{\cos( 2 \pi k x) + \cos( 2 \pi k y) }{(2 k \pi)^2}   
- 4 \sum_{k,n \geq 1}^\infty  \frac{\cos( 2 \pi n x) \cos( 2 \pi k y)}{(2 \pi)^2 (k^2 + n^2)}   
\, .
\end{eqnarray*}
From the identity 
\begin{eqnarray*}
\forall x\in [0,1], \qquad 
2 \sum_{k \geq 1} \frac{\cos \big(  2 \pi k x \big)}{(2 k \pi)^2}
= - \frac{1}{2 }  x (1-x)+  \frac{1}{12} \, .
\end{eqnarray*}
We deduce that for $x<y$
\begin{eqnarray*}
&& \int_0^\infty ds  \int_0^1 dz \,   G^1_s (x,z)  G^2_s (z,y) -   G^1_s (x,0)  G^1_s (y,0) 
= \int_0^\infty ds  \,   G^1_{2 s} (x,y)  -   G^1_s (x,0)  G^1_s (y,0) \nonumber \\
&& =  - \frac{1}{4}  (y-x) (1+x-y)+  \frac{1}{24}   
+ \frac{1}{2 }  x (1-x) + \frac{1}{2 }  y (1-y) - \frac{1}{6}
-  4 \sum_{k,n \geq 1}^\infty  \frac{\cos( 2 \pi n x) \cos( 2 \pi k y)}{(2 \pi)^2 (k^2 + n^2)}   
\, ,
\label{eq: cas 1/2}
\end{eqnarray*}
Finally when the mean density is equal to $1/2$ then  \eqref{eq: C(x,w)} can be rewritten for $x<y$
\begin{eqnarray}
\label{eq: theory}
C(x,y) = 
 - \frac{1}{2} \gs(\grp) + \frac{1}{12}  \cJ^2 -  \cJ^2 \left( \frac{1}{2} (x+y)(1 - (x+y)) + x  \right)
 + 8 \cJ^2 \sum_{k,n \geq 1}^\infty  \frac{\cos( 2 \pi n x) \cos( 2 \pi k y)}{(2 \pi)^2 (k^2 + n^2)}
 \, . \nonumber \\
\end{eqnarray}
We stress the fact that the correlation function diverges when both $x,y$ approach the battery.

 \begin{figure}[h]
 \begin{center}
 \leavevmode
 \epsfysize = 5.5 cm
 \epsfbox{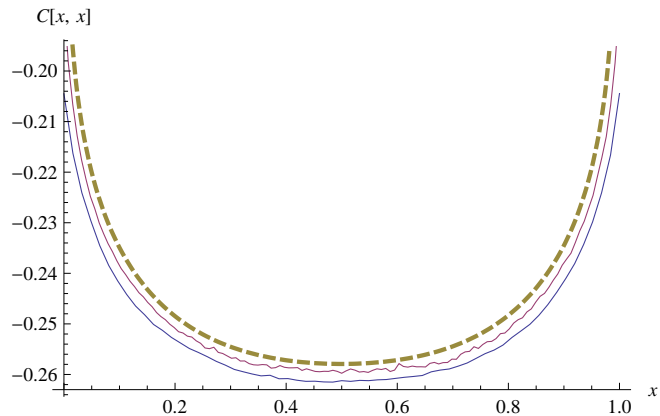}
 \end{center}
 \caption{At mean density $\bar \rho = 1/2$, the correlation $N \bra \eta_i ; \eta_{i+1} \ket$ between  the sites $i$ and $i+1$  versus $x=i/N$ 
 is represented for systems of sizes $64$ and $128$  with $25\times10^9$ updates per site. The dashed line represents
 the  theoretical prediction 
 \eqref{eq: theory}  of $C(x,x)$.}
 \label{fig: data6}
 \end{figure}

In figure \ref{fig: data6}, we compare the exact expression \eqref{eq: theory} to the results of a simulation, for $N=64$ and $N = 128$ for $25 \times 10^9$ updates per site. The agreement is very good. As our data for size $N=128$ lies between the data for $N=64$ and the theoretical prediction, it is reasonable to believe that the prediction does represent the $N \to \infty$ limit.

\section{Some exactly solvable models}
\label{sec: exactly solvable models}

\subsection{Zero range process}
\label{sec: Zero range process}

We study now the ZRP driven by a battery defined in section \ref{sec: moving battery}.
If $p =q$ then the dynamics is reversible wrt the product measure $\otimes_{i= 1}^N \nu_\phi$, where the measure at each site is given by
\begin{eqnarray*}
\nu_\phi (k) = \frac{1}{Z_\phi} \frac{\phi^k}{g(k) !}, \quad
{\rm with} \qquad g(k)! = \prod_{i=1}^k \; g(i)  \, ,
\end{eqnarray*}
where $Z_\phi$ is the normalization factor and we used the convention $g(0)! = 1$.

\medskip

We are going to show that for $p \not = q$, the invariant measure is a product measure with a canonical constraint. A similar property was already derived for zero range models in contact with reservoirs \cite{MF}.
For any $\gb$ such that $(q-p) \gb > 0$, we introduce 
\begin{eqnarray}
\label{eq: variation phi}
\bar \nu_\gb = \otimes_{i= 1}^N \nu_{\phi_i}, \qquad {\rm with} \qquad 
\phi_i = \gb \left( (i-1) +  {p (N-1) +1 \over q-p }  \right)  \, .
\end{eqnarray}
The parameter $\gb$ can be adjusted to fix the mean density.

\begin{pro}
\label{pro: ZR}
The invariant measure of the zero range process on a ring of length $N$ with $M$ particles
is the measure $\bar  \nu_\gb$ \eqref{eq: variation phi} 
conditioned to a total number of particles equal to $M$.
The conditioned measure is independent of $\gb$.
\end{pro}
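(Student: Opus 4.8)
The plan is to verify directly that the product measure $\bar\nu_\gb$ of \eqref{eq: variation phi}, conditioned on $\sum_i \eta_i = M$, is invariant under the ZRP dynamics with a battery at bond $(N,1)$. The standard strategy for proving invariance of a measure under a particle-conserving jump process is to check a ``cycle'' (Kolmogorov-type) condition on each elementary bond and then invoke conservation of the total number of particles. More precisely, I would first recall that for a product measure $\otimes_i \nu_{\phi_i}$ the relevant quantity on bond $(i,i+1)$ is the ratio $\nu_{\phi_i}(k)/\nu_{\phi_i}(k-1) = \phi_i/g(k)$. The generator acting on bond $(i,i+1)$ (with unit rates) moves a particle right at rate $g(\eta_i)$ and left at rate $g(\eta_{i+1})$; for the battery bond the rates are $p\,g(\eta_N)$ (from $N$ to $1$) and $q\,g(\eta_1)$ (from $1$ to $N$). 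I would write out the stationarity equation $\sum_\eta (\bar\nu_\gb \text{ restricted to } \{\sum=M\})(\eta) (L f)(\eta) = 0$ and reduce it, bond by bond, to an algebraic identity among the $\phi_i$.

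The key computation is the following. For an ordinary bond $(i,i+1)$, $i\neq N$, detailed-balance-type cancellation inside the conditioned measure requires $\phi_i = \phi_{i+1}$ — but that is too strong, since the $\phi_i$ in \eqref{eq: variation phi} are \emph{not} constant. What actually must hold is the weaker ``divergence-free flux'' condition: summed over the ring, the net probability flux out of each configuration vanishes. The correct bond condition turns out to be that the product $\phi_1\phi_2\cdots\phi_N$ is unchanged when one cyclically ``shifts'' a particle all the way around, combined with the asymmetry factor $p/q$ at the battery bond. Concretely, I expect the identity to reduce to
\begin{eqnarray*}
\frac{\phi_{i+1}}{\phi_i} = 1 \quad (i \neq N), \qquad \frac{q\,\phi_1}{p\,\phi_N} \cdot (\text{telescoping factor}) = 1,
\end{eqnarray*}
which cannot be met with constant $\phi_i$ but \emph{can} be met once one observes that on the conditioned measure only the \emph{ratios} $\phi_i/\phi_{i+1}$ matter up to the global constraint, and the affine choice $\phi_i = \gb\big((i-1) + \frac{p(N-1)+1}{q-p}\big)$ is precisely the one making the telescoping sum of first-order differences match the battery jump. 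So the heart of the proof is: substitute \eqref{eq: variation phi} into the stationarity equation, use $g(\eta_i)\nu_{\phi_i}(\eta_i) = \phi_i \nu_{\phi_i}(\eta_i - 1)$ to shift indices, and check that all terms cancel in pairs after relabeling — the affine progression of the $\phi_i$ is exactly what produces the cancellation across the battery, where the gain/loss of the factor $p$ vs.\ $q$ is compensated by the jump $\phi_N \to \phi_1$ in the arithmetic progression.

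For the last sentence of the statement — that the conditioned measure is independent of $\gb$ — I would argue as follows. Rescaling $\gb \to \gga\gb$ multiplies every $\phi_i$ by $\gga$, hence multiplies $\nu_{\phi_i}(\eta_i)$ by $\gga^{\eta_i}/(\text{normalization})$, so the unnormalized weight of a configuration $\eta$ gets multiplied by $\gga^{\sum_i \eta_i}$. On the event $\{\sum_i \eta_i = M\}$ this is the \emph{same} factor $\gga^M$ for every configuration, so it cancels upon renormalizing the conditioned measure. Thus $\gb$ only sets the mean density in the unconditioned product measure but drops out after conditioning on $M$ particles; this also shows consistency with the fact that the physically relevant object is the one-parameter family indexed by $M$ (equivalently $\bar\gr = M/N$), not by $\gb$.

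I expect the main obstacle to be the bookkeeping at the battery bond: tracking which terms in the generator sum are ``created'' by a jump across $(N,1)$ and matching them against the terms created by an ordinary jump elsewhere, keeping the $p,q$ factors and the shift $\phi_N \mapsto \phi_1$ straight. A clean way to organize this is to fix a target configuration $\eta$ and enumerate the at most $2N$ configurations from which a single jump reaches $\eta$; writing the balance equation for that fixed $\eta$ makes the cancellation transparent and sidesteps any subtlety about the conditioning (the constraint $\sum = M$ is automatically respected because every jump preserves the particle number). The only real content is then verifying the single algebraic relation that pins down the coefficient $\frac{p(N-1)+1}{q-p}$ in \eqref{eq: variation phi}; everything else is relabeling.
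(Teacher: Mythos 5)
Your plan is essentially the paper's own proof: one writes out $\bar\nu_\gb(Lf)$, uses the identity $g(k)\,\nu_{\phi_i}(k)=\phi_i\,\nu_{\phi_i}(k-1)$ to shift indices, and the terms cancel precisely because $\phi_{i+1}-\phi_i=\gb$ for $i\neq N$ while $q\phi_1-p\phi_N=\gb$ at the battery (i.e.\ $\phi$ is harmonic for the single--particle walk), after which conservation of the particle number justifies the conditioning, and your observation that rescaling $\gb$ multiplies every weight on $\{\sum_i\eta_i=M\}$ by the same factor $\gga^M$ correctly gives the $\gb$--independence. The only caution is that your initial ``Kolmogorov cycle'' heuristic is a red herring --- the stationary measure carries a current and is not reversible, so no multiplicative cycle condition can hold --- but the additive, telescoping flux balance you settle on is exactly the computation the paper carries out.
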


\begin{proof}
The generator of the zero range process is given by
\begin{eqnarray*}
L f(\eta) & = \sum_{i \not = 1,N} g(\eta_i) \big[ (f( \eta^{(i,i+1)}) - f(\eta)) +  (f( \eta^{(i,i-1)}) - f(\eta))
\big] \\
& \qquad \qquad + g(\eta_1) \big[ q (f( \eta^{(1,N)}) - f(\eta)) +  (f( \eta^{(1,2)}) - f(\eta)) \big] \\
& \qquad \qquad + g(\eta_N) \big[ p (f( \eta^{(N,1)}) - f(\eta)) +  (f( \eta^{(N,N-1)}) - f(\eta)) \big] 
\end{eqnarray*}
where $\eta^{i,j}$ is the modified configuration after a jump from $i$ to $j$.

In order to check that $\bar \nu_\gb$ is an invariant measure, it is enough to prove that  for any function $f$, we have 
$\bar  \nu_\gb ( L f) =0$.
\begin{eqnarray*}
&& \bar \nu_\gb (L f) = \sum_\eta \bar \nu_\gb (\eta) \ \sum_{i \not=  N} g(\eta_i)  (f( \eta^{(i,i+1)}) - f(\eta)) + 
g(\eta_{i+1})  (f( \eta^{(i+1,i)}) - f(\eta))  \\
&& \qquad \qquad + q g(\eta_1) (f( \eta^{(1,N)}) - f(\eta)) + p g(\eta_N) (f( \eta^{(N,1)}) - f(\eta)) \\
&& = \sum_\eta \sum_{i \not=  N}  \bar  \nu_\gb (\eta) f(\eta) \Big[ 
g(\eta_{i+1}) \left( 1 + \frac{\phi_i - \phi_{i+1}}{\phi_{i+1}} \right) - g(\eta_{i+1}) 
+ g(\eta_i) \left( 1 + \frac{ \phi_{i+1} - \phi_i }{\phi_i} \right) - g(\eta_i)  \Big] \\
&& \qquad \qquad + 
\bar \nu_\gb (\eta) f(\eta) \Big[ q g(\eta_N)  \left( 1 + \frac{ \phi_1 - \phi_N }{\phi_N} \right)
- q g(\eta_1)  + p g(\eta_1) \left( 1 + \frac{ \phi_N - \phi_1}{\phi_1} \right)
- p g(\eta_N)  \Big] \\
&& = \sum_\eta \sum_{i \not=  N} \bar  \nu_\gb  (\eta) f(\eta) \;  (\phi_i - \phi_{i+1} ) \left[ 
 \frac{g(\eta_{i+1}) }{\phi_{i+1}}  -   \frac{ g(\eta_i)}{\phi_i}   \right] \\
&& \qquad \qquad + 
\bar \nu_\gb (\eta) f(\eta) \Big[  g(\eta_N)  \left( q-p + q\frac{ \phi_{1} - \phi_N}{\phi_N} \right)
 + g(\eta_1) \left( p - q + p \frac{ \phi_{N} - \phi_1}{\phi_1} \right)
 \Big] \\
\end{eqnarray*}
Since $\phi_i$ varies linearly according to \eqref{eq: variation phi}, we obtain 
\begin{eqnarray*}
\bar \nu_\gb (L f )  =  \bar  \nu_\gb \left(  f(\eta) \; \Big(  \gb \left[ 
 \frac{ g(\eta_1)}{\phi_1}   -  \frac{g(\eta_N) }{\phi_N}  \right] 
+  g(\eta_N)  \left( \frac{ q \phi_1 - p \phi_N }{\phi_N} \right)
 + g(\eta_1) \left(  \frac{ p \phi_N - q\phi_1}{\phi_1} \right)
 \Big) \right) = 0 \, .
\end{eqnarray*}
The last equality follows from \eqref{eq: variation phi} which implies that $q \phi_1 - p \phi_N = \gb$.

The number of particles being preserved by the dynamics, the invariant measure is obtained by conditioning $\bar  \nu_\gb$ to have $M$ particles.
\end{proof}

\medskip

Proposition \ref{pro: ZR} can be recovered from  the macroscopic approach. 
The diffusion and conductivity coefficients of the ZRP satisfy $D(\gr) = \gs' (\gr)/2$.
Therefore in the case of a slowly varying weak field, the non-equilibrium part of the correlations $C_\mmu$ in \eqref{C-sol} vanishes and only the canonical constraint remains.

\subsection{Independent variables and canonical constraints}
\label{sec: Independent variables}

We are going to compute the scaling limit of the truncated two-point correlations in product measures with canonical constraints.
By analogy with the measures found in the previous section, we consider a chain of $N$ independent variables $\{\eta_i\}_{0 \leq i \leq N}$ with densities $\gr_i = \bar \gr( i /N)$ which vary slowly from site to site. 

Let $\{ \nu_\gr \}_\gr$ be a family of measures with different chemical potentials which are indexed by their density $\gr$.
The variance of the measure $\nu_\gr$ will be denoted by $\chi(\gr)$. 
We consider the measure $\bar \mu$ which is the product measure $\otimes_{i=1}^N \nu_{\gr_i}$
 conditioned to have a total number of particles equal to the integer part of $\sum_{i = 1}^N \bar \gr( i /N)$. 
The density profile of the measure  $\bar \mu$ is given by $\bar \gr(x)$ and the two-point correlation function scales for large $N$ and $x \not = y$ in $[0,1]$ as
\begin{eqnarray}
\label{eq: correlation independant}
\bar \mu \big( \eta_{Nx} \, ; \,  \eta_{Ny} ) = - \frac{1}{N} \; {\chi \big( \bar \gr(x) \big) \chi \big( \bar \gr(y) \big)  \over \int_0^1 \chi(\bar \gr(r)) dr } \, .
\end{eqnarray}
These asymptotics are very different from the long range non-equilibrium correlations of the SSEP with a battery and the same density profile \eqref{eq: C(x,w)}.

\bigskip

To derive \eqref{eq: correlation independant}, one can approximate the original system in the large $N$ limit by 
a chain of Gaussian independent variables under the canonical constraint.
Define $\mu$ to be the Gaussian measure of $N$ independent Gaussian variables $\{X_i\}_{1 \leq i \leq N}$ with variance 
$\chi_i = \chi \big(  \bar \gr( i /N)  \big)$ at site $i$ under the canonical constraint $\sum_{i=1}^N X_i = 0$.

To compute the marginal of $\mu$ over the two variables $\{X_1,X_2\}$,
we integrate over the $N-2$ remaining Gaussian variables.
Given $\{X_1,X_2\}$ the probability density that $\sum_{i=3}^N X_i =  - X_1 -X_2$
is proportional to $\exp \big( - {1 \over 2\; \hat \chi} ( X_1 + X_2)^2 \big)$ with $\hat \chi = \sum_{i=3}^N  \chi_i$.
Thus  the marginal of $\mu$ over the two variables $\{X_1,X_2\}$ is Gaussian with correlation matrix
\begin{eqnarray*}
\left( 
\begin{array}{cc}
 {1 \over \chi_1}  + {1 \over \;  \hat \chi}   &  {1 \over \;  \hat \chi}\\
{1 \over \;  \hat \chi}    & {1 \over \chi_2}  + {1 \over \;  \hat \chi}  
\end{array}
\right)^{-1} 
= {1 \over \hat \chi + \chi_1 + \chi_2}
\left( 
\begin{array}{cc}
\chi_1 (\hat \chi+ \chi_2)    & -\chi_1 \chi_2 \\
-\chi_1 \chi_2  &   \chi_2 (\hat \chi+ \chi_1) 
\end{array}
\right)
\end{eqnarray*}
For large $N$, $\hat \chi \simeq N  \int_0^1 \chi(\bar \gr(r)) dr$, so that the two point correlation scales like  
\begin{eqnarray*}
\mu ( X_1 ; X_2) \simeq  - {1 \over N} \; {\chi_1 \chi_2 \over \int_0^1 \chi( \bar \gr(r)) dr } \, .
\end{eqnarray*}
Thus \eqref{eq: correlation independant} follows.

\section{Conclusion}

In this paper, we studied several NESS with a canonical constraint on the particle number and computed the two-point correlation functions by using a macroscopic approach. We considered two driving mechanisms: either a microscopic strong field localized on one bond (the battery model), or a weak field with an intensity which varies smoothly on the macroscopic scale. 
The battery model can be recovered as a singular limit of the smoothly varying field.
Different driving mechanisms lead to different stationary measures with different long range correlations. 
In fact, the long range correlations of the canonical models  differ also from the correlations in systems maintained out of equilibrium by reservoirs.

\bigskip

\noindent
{\bf Acknowledgments:}
We thank  P. K. Mohanty, S. Olla and E. R. Speer for helpful discussions.
JL acknowledges the support of  the NSF Grant DMR-044-2066 and AFOSR Grant AF-FA9550-04.
BD and TB acknowledge the support of the French Ministry of Education through the ANR BLAN07-2184264 grant.
The work of TB was partially supported by the NSF Grant DMR-044-2066 and AFOSR Grant AF-FA9550-04 during a stay at Rutgers University and by  the {\it Florence Gould Foundation Endowment} during a stay at the Institute for Advanced Study.

\section{Appendix I: Green's functions}

In this appendix, we derive the exact expressions (\ref{eq: G1 0}), (\ref{eq: G1,G2})
for the Green's functions $G^1$, $G^2$  which satisfy (\ref{eq: bords  G2}), (\ref{eq: bords  G1}).
We recall that $a = \frac{\gs(\grm)}{\gs(\grp)}$.

\medskip

We start with the computation of $G^1_t$ introduced in \eqref{eq: bords G1} which evolves according to 
\begin{eqnarray}
\partial_t G^1_t (x,y) = \partial_x^2 G^1_t (x,y) ,
\quad {\rm with} \qquad
G^1_t (1,y) = a  G^1_t (0,y), 
\quad   \partial_x G^1_t (1,y) = \partial_x G^1_t (0,y) \, ,
\label{eq: bords G1 0}
\end{eqnarray}
with $G^1_0 (x,y) = \delta_{x=y}$.
The Laplacian with boundary conditions \eqref{eq: bords G1 0} at the battery 
is  not self-adjoint, but it can be decomposed on the following basis.  
We define
\begin{eqnarray}
\label{eq: base 1}
f^1_k (x) = (a x + 1 -x) \cos(2 k \pi x),  \qquad 
f^2_k (x) =  \sin (2 k \pi x) \, .
\end{eqnarray}
These functions satisfy the boundary conditions \eqref{eq: bords G1 0}.
Furthermore
\begin{eqnarray*}
\partial_x^2 f^1_k (x) =  - (2 k \pi)^2 f^1_k (x) + 4 k \pi (1-a) f^2_k (x)       ,  \qquad 
\partial_x^2  f^2_k (x) = - (2 k \pi)^2 f^2_k (x) \, .
\end{eqnarray*}
Let $f^1_k (t,x), f^2_k (t,x)$ be the solution at time $t$ of the heat equation with boundary conditions \eqref{eq: bd conditions phi} and initial data $f^1_k (x), f^2_k (x)$, then 
\begin{eqnarray}
\label{eq: evolution Delta 1}
\begin{cases}
f^1_k (t,x) = \exp \big(  - (2 k \pi)^2 t \big) \Big( f^1_k (x) + 4 k \pi (1-a) t f^2_k (x) \Big)   \\
f^2_k (t,x) = \exp \big(  - (2 k \pi)^2 t \big)  f^2_k (x) 
\end{cases}
\end{eqnarray}

\bigskip

The Green's function  $G^2$ introduced in \eqref{eq: bords  G2} satisfies
\begin{eqnarray}
\partial_t G^2_t (x,y) = \partial_x^2 G^2_t (x,y) ,
\quad {\rm with} \qquad
G^2_t (1,y) =  G^2_t (0,y), \quad  a \partial_x G^2_t (1,y) = \partial_x G^2_t (0,y),
\label{eq: bords  G2 0}
\end{eqnarray}
with $G^2_0 (x,y) = \delta_{x=y}$.
For the Laplacian with boundary conditions \eqref{eq: bords  G2 0}, we can introduce the basis
\begin{eqnarray}
\label{eq: base 2}
g^1_k (x) =  \cos (2 k \pi x),  \qquad  g^2_k (x) = (a +  x  - a x) \sin(2 k \pi x) \, .
\end{eqnarray}
and one has
\begin{eqnarray*}
\partial_x^2 g^1_k (x) =  - (2 k \pi)^2 g^1_k (x)     ,  \qquad 
\partial_x^2  g^2_k (x) = - (2 k \pi)^2 g^2_k (x) + 4 k \pi (1-a) g^1_k (x)    \, .
\end{eqnarray*}
If $g^1_k (t,x), g^2_k (t,x)$ denote the solution at time $t$ of the heat equation with boundary conditions \eqref{eq: bords  G2 0} and initial data $g^1_k (x), g^2_k (x)$, then 
\begin{eqnarray}
\label{eq: evolution Delta 2}
\begin{cases}
g^1_k (t,x) = \exp \big(  - (2 k \pi)^2 t \big)  g^1_k (x)    \\
g^2_k (t,x) = \exp \big(  - (2 k \pi)^2 t \big) \Big( 4 k \pi (1-a) t g^1_k (x) + g^2_k (x) \Big) 
\end{cases}
\end{eqnarray}

Using the fact that for $x,y \in (0,1)$
\begin{eqnarray*}
\sum_{n=0}^\infty  \cos( 2 n \pi (x- y) ) = \frac{1}{2} + \frac{1}{2} \gd_{x=y}  \, .
\end{eqnarray*}
one can check that 
\begin{eqnarray*}
&& \frac{a y +1 - y}{2} + \frac{a+1}{4} \gd_{x=y} \\
&& \qquad = \sum_{k=0}^\infty (ay +1 -y) \cos( 2 k \pi y) \cos( 2 k \pi x)
+ (a +x - ax)  \sin( 2 k \pi y) \sin( 2 k \pi x) \, . \nonumber
 \\ && \qquad = \sum_{k=0}^\infty  \,  f_k^1(y) g_k^1 (x) + f_k^2(y) g_k^2(x) \, .
\end{eqnarray*}
This implies 
\begin{eqnarray}
\frac{a+1}{4} \gd_{x = y} =  \frac{1}{2}  f^1_0 (y)   g^1_0 (x) + \sum_{k = 1}^\infty  
f^1_k (y) g^1_k (x) + f^2_k (y) g^2_k (x)   \, .
\label{il fait froid}
\end{eqnarray}

\bigskip

Thus  the Green's function $G^1$ is  given by
\begin{eqnarray}
\label{eq: G1}
\frac{a+1}{4} G^1_t (x,y) &=& \frac{1}{2}   f^1_0 (x)   + \sum_{k = 1}^\infty    f^1_k (t, x) g^1_k (y) + f^2_k (t,x) g^2_k (y) \\
&=&   \frac{1}{2} f^1_0 (x)   + \sum_{k = 1}^\infty   \exp \big(  - (2 k \pi)^2 t \big)
\Big(    f^1_k (x) g^1_k (y) + f^2_k (x) g^2_k (y) + 4 k \pi (1-a) t f^2_k (x) g^1_k (y)   \Big)  \, . \nonumber
\end{eqnarray}

Similarly by exchanging $x$ and $y$ in the r.h.s. of (\ref{il fait froid}) one can see that
\begin{eqnarray}
\frac{a+1}{4} G^2_t (x,y) =  \frac{1}{2}  f^1_0 (y) + \sum_{k = 1}^\infty   \exp \big( - (2 k \pi)^2 t \big)
\Big(    f^1_k (y) g^1_k (x) + f^2_k (y) g^2_k (x) + 4 k \pi (1-a) t f^2_k (y) g^1_k (x)   \Big)  \, . \nonumber\\
\label{eq: G2}
\end{eqnarray}
This proves the expression \eqref{eq: G1 0}  and 
the identity $G^2_t (x,y) = G^1_t (y,x)$ claimed in \eqref{eq: G1,G2}.

\section{Appendix II}

In this appendix, we are going to derive the relation \eqref{eq: expansion}  from \eqref{eq: phi, h}.

\medskip

We first start by proving some identities.
The correlation between $x$ and $w$ is symmetric so that for any $x$ and $w$
\begin{eqnarray*}
 \int_0^1 dy \,   G^1_s (x,y) \partial_y \Big( \gs(\bar \gr (y) ) \partial_y  G^2_s (y,w)  \Big)
 =
 \int_0^1 dy \,   G^1_s (w,y) \partial_y \Big( \gs(\bar \gr (y) ) \partial_y  G^2_s (y,x)  \Big)
\, . \nonumber 
\end{eqnarray*}
Using the fact that $G^1_s (x,y) = G^2_s (y,x)$ \eqref{eq: G1,G2}, one gets
\begin{eqnarray}
\label{eq: sym}
\int_0^1 dy \,   G^1_s (x,y) \partial_y \Big( \gs(\bar \gr (y) ) \partial_y  G^2_s (y,w)  \Big)
= \int_0^1 dy \,   G^2_s (y,w) \partial_y \Big( \gs(\bar \gr (y) ) \partial_y  G^1_s (x,y)  \Big)
\, . 
\end{eqnarray}
Expanding each terms of \eqref{eq: sym} leads to the two identities, 
\begin{eqnarray}
\label{eq: 1st IPP sym}
&& \int_0^1 dy \,   G^1_s (x,y) \partial_y \Big( \gs(\bar \gr (y) ) \partial_y  G^2_s (y,w)  \Big)\\
&& \qquad = \int_0^1 dy \,    G^1_s (x,y) \partial_y  \gs(\bar \gr (y) ) \partial_y  G^2_s (y,w) 
+ \int_0^1 dy \,    G^1_s (x,y)  \gs(\bar \gr (y) ) \partial^2_y  G^2_s (y,w) \, , \nonumber \\
&& \qquad = \int_0^1 dy \,    \partial_y  G^1_s (x,y) \partial_y  \gs(\bar \gr (y) ) G^2_s (y,w) 
+ \int_0^1 dy \,    \partial^2_y  G^1_s (x,y)  \gs(\bar \gr (y) ) G^2_s (y,w) \, . \nonumber 
\end{eqnarray}

We introduce the notation $[F(y)]_{y = 0}^{y=1}=F(1) - F(0)$.
Integrating by parts the first term in the last equation of \eqref{eq: 1st IPP sym}, one gets
\begin{eqnarray}
&& 
\int_0^1 dy \,    \partial_y  G^1_s (x,y) \partial_y  \gs(\bar \gr (y) ) G^2_s (y,w)   \nonumber \\
&& \qquad =  
- \int_0^1 dy \,  G^1_s (x,y)   \partial_y \Big(  \partial_y \gs(\bar \gr (y) ) G^2_s (y,w) \Big) 
+ [ G^1_s (x,y)   \partial_y \gs(\bar \gr (y) ) G^2_s (y,w) ]_{y = 0}^{y=1} \nonumber  \\
&& \qquad = 
- \int_0^1 dy \, G^1_s (x,y)   \,  \partial_y \gs(\bar \gr (y) ) \, \partial_y G^2_s (y,w) 
- \int_0^1 dy \,  G^1_s (x,y)   \partial^2_y \gs(\bar \gr (y) ) G^2_s (y,w)  \nonumber \\
&& \qquad \qquad 
+ [ G^1_s (x,y)   \partial_y \gs(\bar \gr (y) ) G^2_s (y,w) ]_{y = 0}^{y=1} \, .
\label{eq: 2nd IIP sym}
\end{eqnarray}
By summing the two equalities in \eqref{eq: 1st IPP sym} and simplifying thanks to the identity 
\eqref{eq: 2nd IIP sym}, we finally obtain

\begin{eqnarray}
&& 2  \int_0^1 dy \,  G^1_s (x,y) \partial_y \Big( \gs(\bar \gr (y) ) \partial_y  G^2_s (y,w)  \Big) \nonumber \\
&&\qquad =  \int_0^1 dy \,   \partial^2_y  G^1_s (x,y)  \gs(\bar \gr (y) )  G^2_s (y,w)  
+ \int_0^1 dy \,   G^1_s (x,y)  \gs(\bar \gr (y) ) \partial^2_y  G^2_s (y,w) \nonumber \\
&& \qquad \qquad 
- \int_0^1 dy \,    G^1_s (x,y)   \partial^2_y \gs(\bar \gr (y) ) G^2_s (y,w) 
+
[ G^1_s (x,y)   \partial_y \gs(\bar \gr (y) ) G^2_s (y,w) ]_{y = 0}^{y=1} \, .
\label{eq: IPP res}
\end{eqnarray}
Note that $\partial^2_y \gs(\bar \gr (y) ) = -4  \cJ^2$, where $\cJ = \gr^+ - \gr^-$ is the macroscopic mean current
(for $N$ large $\cJ = N \hq$ with $\hq$ introduced in \eqref{eq: mean flux}). 
The discontinuity of the density at the battery implies
$ \partial_y \gs(\grm) - \partial_y \gs( \grp ) = - 4 (\gr^+ - \gr^-) \cJ = - 4 \cJ^2$.
Thus integrating \eqref{eq: IPP res} wrt time leads to
\begin{eqnarray}
\label{eq: appendice 2}
&& \int_0^T ds  \int_0^1 dy \,  G^1_s (x,y) \partial_y \Big( \gs(\bar \gr (y) ) \partial_y  G^2_s (y,w) \Big) \\
&& \qquad =    \int_0^1 dy \,  G^1_T (x,y) \frac{\gs(\bar \gr (y))}{2}  G^2_T (y,w)   -  \frac{1}{2} \gs(\bar \gr (x) )  \gd_{x = w} \nonumber \\
&& \qquad + 2 \cJ^2 \, \int_0^T ds  \int_0^1 dy \,   G^1_s (x,y)  G^2_s (y,w) 
- 2  \cJ^2 \int_0^T ds \,  G^1_s (x,0)  G^2_s (0,w) \, . \nonumber
\end{eqnarray}
Letting $T$ go to infinity, one gets from the expression of $G^1$ \eqref{eq: G1 0} and $G^2$ \eqref{eq: G1,G2}
\begin{eqnarray*}
\lim_{T \to \infty} \int_0^1 dy \,  G^1_T (x,y) \frac{\gs(\bar \gr (y))}{2}  G^2_T (y,w)   
= \frac{2}{(a+1)^2} \ga (x) \ga (w) \int_0^1 dy \,   \gs(\bar \gr (y))    \, ,
\end{eqnarray*}
with the notation $a =  \frac{\gs( \grm)}{\gs( \grp )}$ and $\ga (x) = (ax + 1-x)$.

As the steady state is linear, 
$\gs(\bar \gr (y)) = \gs(\grp) +y (  \gs(\grm) - \gs(\grp) +2 \cJ^2) -2  \cJ^2 y^2$, and we get
\begin{eqnarray*}
\lim_{T \to \infty} \int_0^1 dy \,  G^1_T (x,y) \frac{\gs(\bar \gr (y))}{2}  G^2_T (y,w)   
= \frac{2}{(a+1)^2} \ga (x) \ga (w)  \left(  \frac{\gs(\grp ) + \gs(\grm)}{2} +  \frac{\cJ^2}{3} \right) \, .
\end{eqnarray*}
Using \eqref{eq: appendice 2} and the symmetry $G^1_s (w,z) = G^2_s (z,w)$ \eqref{eq: G1,G2}, we have finally shown that
\begin{eqnarray*}
\lim_{T \to \infty}  \int_0^T ds  \int_0^1 dy \,  G^1_s (x,y) \partial_y \Big( \gs(\bar \gr (y) ) \partial_y  G^2_s (y,w) \Big) 
 =      -  \frac{1}{2} \gs(\bar \gr (x) )  \gd_{x = w} - C(x,w) \, ,
\end{eqnarray*}
with
\begin{eqnarray}
 C(x,w) &=&  
 - \frac{2}{(a+1)^2}  \left( \frac{\gs(\grp ) + \gs(\grm)}{2} +  \frac{\cJ^2}{3} \right)  (ax + 1-x) (a w + 1-w) \\
&& \qquad -  2 \cJ^2 \, \int_0^\infty ds  \int_0^1 dz \,   G^1_s (x,z)  G^1_s (w,z) -   G^1_s (x,0)  G^1_s (w,0) \, . \nonumber
\end{eqnarray}

\medskip

The previous expression combined to \eqref{eq: phi, h} leads to \eqref{eq: contrainte }.

\section{Appendix III}

\subsection{Uniqueness of the steady state}
\label{sec: Uniqueness of the steady state}

We will show the uniqueness of the smooth solutions of \eqref{steady-state}
\begin{equation}
\label{steady-state 2}
\partial_x \Big( D \big( \bar \gr(x) \big) \bar \gr' (x) - \mmu(x) \gs \big( \bar \gr(x) \big) \Big) =  0 \, ,
\end{equation}
for regular coefficients and $D$ positive.

Suppose that $\gr_1(x),\gr_2(x)$ are two solutions of \eqref{steady-state 2} with the same mean density.
Then there is $\cJ_1 > \cJ_2$ such that 
\begin{eqnarray*}
-D \big( \bar \gr_1(x) \big) \bar \gr_1 ' (x) + \mmu(x) \gs \big( \bar \gr_1 (x) \big) &=&  \cJ_1 \\
-D \big( \bar \gr_2 (x) \big) \bar \gr_2 ' (x) + \mmu(x) \gs \big( \bar \gr_2 (x) \big) &=&  \cJ_2 
\end{eqnarray*}
If the profiles coincide at $x_0$ then substracting both equations, one has
\begin{eqnarray*}
D \big( \bar \gr_1(x_0) \big) [ \bar \gr_1 ' (x_0) - \bar \gr_2 ' (x_0) ] = \cJ_2 -   \cJ_1 < 0 \, . 
\end{eqnarray*}
This implies that when the two solutions cross then $\bar \gr_1 > \bar \gr_2$ before the crossing and $\bar \gr_1 < \bar \gr_2$ after the crossing, so that the solutions cannot cross more than once. 
But the profiles are smooth, periodic  and have the same mean density, thus they have to cross an even number of times.
This is a contradiction and therefore both profiles have to be equal ($\cJ_1 = \cJ_2$).

\subsection{Convergence of the linearized evolution}
\label{sec: Convergence of the linearized evolution}

We turn now to the convergence of the linearized evolution \eqref{eq: nouveau G1}
\begin{eqnarray}
\partial_t f (t,x) 
=
\partial_x^2 
 \Big( D(\bar \gr(x)) \; f (t,x) \Big) -  \partial_x \Big( \mmu(x)  \sigma' (\bar \gr(x))  f (t,x) \Big) \  ,
\label{eq: nouveau eq}
\end{eqnarray}
with smooth initial data $f(0,x)$ and mean $\int_0^1 dx \, f(0,x) = 1$. 
It is equivalent to consider the evolution
\begin{eqnarray}
\partial_t f (t,x) = \partial_x \Big( D(\bar \gr(x)) \; \partial_x f (t,x) \Big) -  \partial_x \Big( \ga(x)   f (t,x) \Big) \  ,
\label{eq: nouveau eq 2}
\end{eqnarray}
with $\ga(x) = - \partial_x  D(\bar \gr(x)) +  \mmu(x)  \sigma' (\bar \gr(x))$. 
At any time $t>0$, $f(t,x)$ can then be interpreted as the probability density of a particle evolving on the ring $[0,1]$ with  a non homogeneous diffusion coefficient $D(\bar \gr(x))$ and  a drift $\ga(x)$.
As the particle evolves on a compact set, it will reach a stationary state when $t$ goes to infinity.
The limiting density for the particle position will be denoted by $\bar f(x)$  and it is the unique solution with mean density $\int_0^1 dx \, \bar f (x) = 1$ of 
\begin{eqnarray}
 \partial_x \Big( D(\bar \gr(x)) \; \partial_x \bar f (x) \Big) -  \partial_x \Big( \ga(x)   \bar f (x) \Big) = 0 \, .
\label{eq: nouveau steady}
\end{eqnarray}

A way to understand the relaxation to the stationary state is to check that  the relative entropy
\begin{eqnarray}
S(t) = - \int_0^1 dx \; f(t,x) \log \left( {f(t,x) \over \bar f(x)} \right) \, ,
\label{eq: Lyapunov}
\end{eqnarray}
is a Lyapunov function for the evolution \eqref{eq: nouveau eq 2}. 
This is a general fact for Markov processes. 
Taking the time derivative one has
\begin{eqnarray}
\partial_t S(t) =  \int_0^1 dx \; \Big( D(\bar \gr(x)) \; \partial_x f (t,x) - \ga(x)   f (t,x) \Big)\;  \partial_x \log \left( {f(t,x) \over \bar f(x)} \right) \, .
\label{eq: Lyapunov 2}
\end{eqnarray}
We now note that 
\begin{eqnarray*}
\int_0^1 dx \;  \ga(x)   f (t,x)  \partial_x \log \left( {f(t,x) \over \bar f(x)} \right) 
&=& 
\int_0^1 dx \;  \ga(x)   \bar f (x)  \partial_x  \left( {f(t,x) \over \bar f(x)} \right) \nonumber \\ 
&=& 
\int_0^1 dx \; D(\bar \gr(x)) \; \partial_x \bar f (x)   \partial_x  \left( {f(t,x) \over \bar f(x)} \right) \, ,
\end{eqnarray*}
where we used \eqref{eq: nouveau steady} in the last equality.
Combined with \eqref{eq: Lyapunov 2}, this leads to
\begin{eqnarray}
\partial_t S(t) &=&  \int_0^1 dx \; D(\bar \gr(x)) \; \Big(  {\bar f(x) \over f(t,x)} \partial_x f (t,x) - \partial_x \bar f (x)\Big) 
  \partial_x  \left( {f(t,x) \over \bar f(x)} \right)  \nonumber \\
&=& 
4 \int_0^1 dx \; D(\bar \gr(x)) \bar f(x) \; \left(  \partial_x  \sqrt{ {f(t,x) \over \bar f(x)} } \right)^2 \, .
\label{eq: Lyapunov 3}
\end{eqnarray}
Thus $S(t)$ is a Lyapunov function.
A quantitative estimate of the approach to equilibrium could then be obtained by using  a log-sobolev inequality.

\end{document}